\documentclass[acmsmall,screen,nonacm]{acmart}

\usepackage{cleveref}
\usepackage{cancel}
\usepackage{mathtools}
\usepackage{mathpartir}
\usepackage{stmaryrd}
\usepackage{listings}
\usepackage{xcolor}
\usepackage{enumitem}
\usepackage{booktabs}

\crefformat{section}{\S#2#1#3}
\crefmultiformat{section}{\S#2#1#3}{ and \S#2#1#3}{, #2#1#3}{ and \S#2#1#3}
\crefformat{subsection}{\S#2#1#3}
\crefmultiformat{subsection}{\S#2#1#3}{ and \S#2#1#3}{, #2#1#3}{ and \S#2#1#3}
\crefformat{subsubsection}{\S#2#1#3}
\crefmultiformat{subsubsection}{\S#2#1#3}{ and \S#2#1#3}{, #2#1#3}{ and \S#2#1#3}

\definecolor{kwcolor}{RGB}{0,0,160}
\definecolor{commentcolor}{RGB}{101,123,131}
\definecolor{strcolor}{RGB}{42,120,118}
\definecolor{paramcolor}{RGB}{181,137,0}
\definecolor{opcolor}{RGB}{165,40,110}

\lstdefinelanguage{cambria}{
  morekeywords={with,handle,handler,return,do,in,fun,rec,if,then,else,case,of,inl,inr,effect,finally},
  morekeywords=[2]{\$p,\$loc,\$name,\$tid,\$node,\$urn},
  morekeywords=[3]{!act,!bernoulli,!eq,!fork,!fresh,!get,!newurn,!node,!op,!print,!ref,!sample,!set,!stop,!uniform,!wait},
  sensitive=true,
  alsoletter={\$!},
  morecomment=[l]{--},
  morestring=[b]",
  literate=
    {->}{$\to$}2
    {<-}{$\leftarrow$}2
    {~>}{$\leadsto$}2
    {=>}{$\Rightarrow$}2
    {++}{$\mathbin{+\!\!+}$}2
    {forall}{$\forall$}1,
}

\makeatletter
\newcommand{\cul}[1]{\mathpalette\cul@{#1}}
\newcommand{\cul@}[2]{\sbox\z@{$\m@th#1#2$}%
  \mathord{\copy\z@\kern-\wd\z@\rule[-1.6pt]{\wd\z@}{0.5pt}}}
\makeatother
\newcommand{\cC}{\cul{C}}
\newcommand{\cD}{\cul{D}}

\newcommand{\cambria}{\textsc{Cambria}}

\newcommand{\mlstinline}[1]{\mbox{\lstinline|#1|}}
\newcommand{\xstr}{\mlstinline{"x"}}

\newcommand{\void}{\mathsf{Void}}
\newcommand{\tunit}{\mathsf{Unit}}
\newcommand{\tint}{\mathsf{Int}}
\newcommand{\tbool}{\mathsf{Bool}}
\newcommand{\tdouble}{\mathsf{Double}}
\newcommand{\tstring}{\mathsf{Str}}
\newcommand{\tname}{\mathsf{Name}}

\newcommand{\tfun}[2]{#1 \to #2}

\newcommand{\thandler}[2]{#1 \Rightarrow #2}

\newcommand{\comp}[2]{#1\;!\;#2}
\newcommand{\arity}[2]{#1 \to #2}

\newcommand{\op}{\mathsf{op}}
\newcommand{\ret}{\mathsf{return}\;}
\newcommand{\ddo}{\mathsf{do}\;}
\newcommand{\iin}{\;\mathsf{in}\;}

\newcommand{\fun}{\mathsf{fun}\;}
\newcommand{\rec}{\mathsf{rec}\;}
\newcommand{\hhandler}{\mathsf{handler}}
\newcommand{\with}{\mathsf{with}\;}
\newcommand{\handle}{\;\mathsf{handle}\;}

\newcommand{\dom}{\mathsf{dom}}
\newcommand{\cod}{\mathsf{cod}}

\newcommand{\pack}{\mathsf{pack}}
\newcommand{\unpack}{\mathsf{unpack}\;}
\newcommand{\as}{\;\mathsf{as}\;}
\newcommand{\unit}{(\,)}

\newcommand{\rref}{\mathsf{ref}}
\newcommand{\sset}{\mathsf{set}}
\newcommand{\fresh}{\mathsf{fresh}}

\newcommand{\eq}{\mathsf{eq}}

\newcommand{\G}{\Gamma}
\newcommand{\Sig}{\Sigma}
\newcommand{\fpv}{\mathsf{fpv}}
\newcommand{\fv}{\mathsf{fv}}
\newcommand{\rel}{\mathsf{rel}}
\newcommand{\Beta}{\mathsf{Beta}}
\newcommand{\Uniform}{\mathsf{Uniform}}

\newcommand{\corestep}{\rightsquigarrow}

\newcommand{\nofin}[1]{\bar{#1}}

\newcommand{\tj}[3]{#1; #2 \vdash #3}
\newcommand{\vj}[5]{#1; #2; #3 \vdash #4 : #5}
\newcommand{\cj}[6]{#1; #2; #3 \vdash #4 : \comp{#5}{#6}}
\newcommand{\cjp}[5]{#1; #2; #3 \vdash #4 : #5}
\newcommand{\wj}[5]{#1; #2; #3 \vdash_{\mathsf w} #4 : #5}
\newcommand{\vdashc}{\vdash_{\mathsf c}}
\newcommand{\tjc}[3]{#1; #2 \vdashc #3}
\newcommand{\vjc}[5]{#1; #2; #3 \vdashc #4 : #5}

\newcommand{\cjpc}[5]{#1; #2; #3 \vdashc #4 : #5}
\newcommand{\eenv}{\bullet}

\newcommand{\interp}[2][\rho]{{\llbracket #2 \rrbracket_{#1}}}
\newcommand{\siminterp}[2][\rho]{{\llbracket #2 \rrbracket^{\approx}_{#1}}}
\newcommand{\Rel}{\mathbf{Rel}}

\newcommand{\ctx}{K}

\setcopyright{none}
\acmJournal{PACMPL}
\acmVolume{1}
\acmNumber{POPL}
\acmArticle{1}
\acmYear{2027}
\acmMonth{1}
\acmDOI{}

\begin{document}

\title{Cambria: Resource Abstraction for Parametrized Algebraic Effects and Handlers}

\author{Jack Liell-Cock}
\orcid{0009-0005-7121-8095}
\affiliation{%
  \institution{University of Oxford}
  \city{Oxford}
  \country{United Kingdom}
}
\email{jack.liell-cock@cs.ox.ac.uk}

\author{Sam Staton}
\orcid{0000-0002-7149-3805}
\affiliation{%
  \institution{University of Oxford}
  \city{Oxford}
  \country{United Kingdom}
}
\email{sam.staton@cs.ox.ac.uk}

\begin{abstract}
The algebraic effects and handlers paradigm separates the concerns of the interface and implementation of computational effects in programming languages.
We present \cambria{}, a language that extends this framework to the parametrized setting.
Effect signatures may use abstract parameter types that are instantiated by the handler along with the operation implementations.
Parameters abstract over resources, such as memory locations or thread IDs, permitting algebraic effects to encode dynamic allocation.
They are first-class in the type system but erased at runtime, requiring no coercions or type-directed reduction.

We prove parametricity via a step-indexed logical relation, formalizing the abstraction guarantee provided by parametrized handlers.
We also establish type safety and classify the annotations needed for completeness of the type inference algorithm.
We demonstrate \cambria{}'s practicality with a working implementation
and provide examples including local state, P\'olya's urn, and concurrent thread management.
The last is a parametrized effect whose abstract thread IDs are shared between concurrent computations, going beyond standard instances.
\cambria{} is the first calculus with user-defined resource-allocating effects that guarantees, via parametricity,
that client code cannot depend on how a handler represents its resources.
\end{abstract}

\begin{CCSXML}
<ccs2012>
   <concept>
       <concept_id>10003752.10010124.10010125.10010126</concept_id>
       <concept_desc>Theory of computation~Control primitives</concept_desc>
       <concept_significance>500</concept_significance>
       </concept>
   <concept>
       <concept_id>10003752.10003790.10011740</concept_id>
       <concept_desc>Theory of computation~Type theory</concept_desc>
       <concept_significance>500</concept_significance>
       </concept>
   <concept>
       <concept_id>10003752.10010124.10010138.10011119</concept_id>
       <concept_desc>Theory of computation~Abstraction</concept_desc>
       <concept_significance>300</concept_significance>
       </concept>
   <concept>
       <concept_id>10003752.10010124.10010131.10010134</concept_id>
       <concept_desc>Theory of computation~Operational semantics</concept_desc>
       <concept_significance>300</concept_significance>
       </concept>
   <concept>
       <concept_id>10011007.10011006.10011008.10011009.10011012</concept_id>
       <concept_desc>Software and its engineering~Functional languages</concept_desc>
       <concept_significance>100</concept_significance>
       </concept>
 </ccs2012>
\end{CCSXML}

\ccsdesc[500]{Theory of computation~Control primitives}
\ccsdesc[500]{Theory of computation~Type theory}
\ccsdesc[300]{Theory of computation~Abstraction}
\ccsdesc[300]{Theory of computation~Operational semantics}
\ccsdesc[100]{Software and its engineering~Functional languages}

\keywords{algebraic effects, handlers, parametricity, logical relations, resource abstraction, type inference}

\maketitle

\section{Introduction}\label{sec:intro}

Algebraic effects and handlers~\cite{plotkin_handling_2009} provide a principled approach to programming with side effects.
A computation performs abstract \emph{operations} such as reading state or spawning a thread,
and a \emph{handler} gives these operations meaning by supplying an implementation.
This separation of interface from implementation has made algebraic effects an increasingly popular foundation for effect systems in both research and practice~%
\cite{bauer_programming_2015, hillerstrom2016liberating, leijen_koka_2017, lindley_do_2017, sivaramakrishnan_retrofitting_2021}.

An effect's interface often mentions types that the client \emph{uses} but should never \emph{inspect}.
The underlying representation is the handler's choice alone. Local state is the classic example.
Client code may allocate memory locations, and read and write through them, but it should never need to know how a location is represented.
For example, a location could be an index in a list, a unique name in a map, or an opaque token from the runtime.
Committing the client to a concrete representation up front buys it nothing and forces unnecessary bookkeeping.
Moreover, by keeping the locations abstract, the client cannot forge new ones, inspect them for equality, or perform arithmetic on them,
except through the operations the interface provides.
Thus, abstraction also provides a safety guarantee.

We present \cambria{}, a language that extends algebraic effects and handlers to the parametrized setting.
Formally, parametrized handlers are derived from the theory of parametrized algebraic theories (PATs)~\cite{Staton13, Staton13PL},
which extend classic algebraic theories with the ability of operators to bind and use abstract parameters in their arguments, denoting resource allocation and usage.
Then a model of the theory, or a handler, instantiates the parameter as a type, as well as interpreting the operations.
This generalization was introduced to handle the problem of \emph{instances of effects},
where a computational effect may dynamically allocate resources at runtime, such as new memory locations in local state.
Instances of effects are a leading case for parametrized handlers, but a parameter is more generally any abstract type that an operation introduces and a handler instantiates.
We demonstrate that parametrized handlers, as the operationalization of PATs, enable idiomatic implementations of diverse effects and parameters (\Cref{sec:local-state}--\ref{sec:concurrency}).

One approach to instances is to represent each as a concrete object in the program.
Eff originally provided first-class instances in this way~\cite{bauer_programming_2015}, but later removed them,
and subsequent calculi recovered instances as dynamically allocated labels~\cite{biernacki2020binders, devilhena2023labels}, first-class names~\cite{xie2022firstclass}, or capabilities~\cite{brachthauser2020capabilities}.
The drawback of these approaches is that either the resources are not first-class, or the representation is fixed and imposed on the client.
\cambria{}'s alternative approach to this problem separates the concerns of \emph{dispatch} and \emph{instances}.
That is, a single handler handles all instances of the same effect, with the operators encoding an interface to multiple instances.
For example, with local state, rather than having multiple copies of the \lstinline|!get| and \lstinline|!set| operators,
we have an additional \lstinline|!ref| operator that creates a new memory location that becomes an argument to \lstinline|!get| and \lstinline|!set|.

Effects such as dynamically threaded concurrency~\cite{kammar2025equational} and code jumps~\cite{staton2014jumps} go beyond independent instances.
Parameters are shared and used to coordinate different sub-computations.
This motivates a single handler interpreting all instances, rather than a witness distinguishing each one,
and \cambria{} provides this while keeping the parameters abstract.
In the concurrency example, the effect operations encode POSIX-like fork and wait.
Forking dynamically spawns a thread, the reference to which is an abstract thread ID that other processes can wait on.
The ability to instantiate the abstract thread ID in the parametrized handler means that each scheduler implementation remains self-contained and simple.
In \Cref{sec:concurrency}, we demonstrate this with three different handlers implementing child-biased scheduling, parent-biased scheduling, and dependency diagnostics.

The key method underlying \cambria{} is that parametrized handlers decompose into classic handlers and existential types.
This lets the surface language stay ergonomic while the metatheory reduces to standard, modular pieces.
The main result, parametricity, then gives representation independence for effects with dynamic resources without keeping track of allocation.
The decomposition also eases adoption since a language that already provides effect handlers and existential types, such as OCaml~5~\cite{sivaramakrishnan_retrofitting_2021},
can support parametrized handlers with little new core machinery.
The addition is moreover backwards compatible since a handler with no parameters is exactly a classic handler.

In particular, we make the following contributions:
\begin{enumerate}
  \item \textbf{Language design.} We present \cambria{}, a core calculus combining algebraic effect handlers, parameter variables, let-polymorphism,
    and general recursion, and establish type safety (\Cref{sec:overview}--\ref{sec:calculus}).

  \item \textbf{Parametricity.} We prove a parametricity theorem via a step-indexed logical relation (\Cref{sec:parametricity}).
    This rests on an elaboration result (\Cref{sec:elaboration}) that shows parametrized handlers decompose into classic handlers and existential types.

  \item \textbf{Type inference.} We establish an annotation completeness result (\Cref{sec:inference}) that states that
    it suffices to annotate only operations with arities mentioning abstract parameter types.
    All other types, including polymorphic ones, are inferred.

  \item \textbf{Expressiveness.} We demonstrate that \cambria{} can express parametrized effects with abstraction guarantees that prior work cannot,
    including concurrent thread management from recent work on dynamic threads~\cite{kammar2025equational}.
    We provide an implementation with runnable examples (\Cref{sec:overview}).
\end{enumerate}

All results novel to \cambria{} are mechanized in the Rocq prover, relying only on functional extensionality.

\section{Overview of the \cambria{} Language}\label{sec:overview}

We introduce \cambria{} through a series of examples that exercise its key features: parameter variables, let-polymorphism, general recursion, and their interactions.
Each example is executable in our implementation.

\subsection{Local State and Fresh Name Generation} \label{sec:local-state}

The canonical example of a dynamic effect is local state, where we have an additional operation \lstinline|!ref|
that allocates a reference given an initial value, along with the standard operations \lstinline|!get| and \lstinline|!set|
that read from and write to memory locations of type \lstinline|$loc|, where the leading dollar sign marks an abstract parameter type.
The following program uses this effect to allocate two reference cells and manipulate data between them before returning the value at one of the locations.
\begin{lstlisting}
effect !ref : Int  ~> $loc.
effect !get : $loc ~> Int.
effect !set : $loc * Int ~> Unit.
do a <- !ref 2 in
do b <- !ref 3 in
do _ <- !set (a, !get b) in
!get a
\end{lstlisting}
Note that here the \lstinline|effect| declarations serve merely to specify the operation arities, and in particular which types are abstract parameter types (\lstinline|$p|) rather than concrete types left to be inferred (\Cref{sec:inference}).
The \lstinline|effect| declaration allocates nothing in itself; all generative behaviour is up to the handler of an operation such as \lstinline|!ref|, as we see next.
(See \Cref{sec:related} for further comparison with other systems.)

The parametrized handle construct has an additional parameter substitution clause \lstinline|[$p -> A]|,
allowing the abstract parameter types in a computation \lstinline|c| to be instantiated before being handled by a (classic) handler \lstinline|h|.
\begin{lstlisting}
with [$p -> A] h handle c
\end{lstlisting}
Multiple parameter instantiations are separated by commas, and the instantiation clause can be dropped when there are none,
recovering the classic handle construct.
In the examples that follow, we show each handler with the instantiation it is applied under,
writing \lstinline|[$p -> A] handler {...}| for the handler and substitution pair used in \lstinline|with [$p -> A] h handle c|.

We may handle the local state effect using the classic parameter passing trick for stateful effects.
Rather than a global value, we choose the state to be a function from the names of our memory locations to the values that they store.
The function encoding is for self-containment; any associative data structure would work equally well.
The convenience of parametrized effect handlers is that we need not worry about the implementation of the memory location \lstinline|$name| type yet;
we just need an interface for fresh name generation \lstinline|!fresh : Unit ~> $name|, and equality checking \lstinline|!eq : $name * $name ~> Bool|.
Then, we can use another handler to implement this interface at a later stage.
\begin{lstlisting}
[$loc -> $name] handler {
  return x    -> return (fun _ -> return x),
  get a k     -> return (fun s -> k (s a) s),
  set (a,x) k -> return (fun s -> k () (fun b ->
                   if !eq (a, b) then return x else s b)),
  ref x k     -> do a <- !fresh () in return (fun s ->
                   k a (fun b -> if !eq (a, b) then return x else s b)),
  finally s   -> s (fun _ -> return 0)
}
\end{lstlisting}
This function encoding necessitates a default value for initialization (in this case 0), but parametricity (\Cref{sec:parametricity})
guarantees that only locations initialized with \lstinline|!ref| can be accessed, making this default value observationally meaningless.
A genuine drawback of the encoding is that it has no means of deallocating a reference that has gone out of scope.
We argue in \Cref{sec:equations} that client-transparent liveness analysis for reachability-based garbage collection would violate parametricity,
and discuss alternatives.

The fresh name effect could be handled by parameter passing and incrementing an integer,
concretizing the \lstinline|$loc| and \lstinline|$name| types to integers.
\begin{lstlisting}
[$name -> Int] handler {
  return x   -> return (fun _ -> return x),
  fresh _ k  -> return (fun n -> k n (n+1)),
  eq (a,b) k -> k (a == b),
  finally f  -> f 0
}
\end{lstlisting}
Despite the concretization, the following code handled in the same way would not type-check,
because \lstinline|a| is parametric from the perspective of the client code.
\begin{lstlisting}
do a <- !ref 2 in !set (a+1, 3)
\end{lstlisting}
The following sections demonstrate that the programmer's freedom to choose the interpretation of parameters
on top of the operations leads to modular and expressive language constructs.

\subsubsection{Polymorphism Meets Abstraction}\label{sec:poly-refs}

Parametrized handlers support statically unbounded creation of effect instances.
The number of memory locations can vary based on runtime data.
Consider the following \cambria{} program that uses a polymorphic \lstinline|map| to allocate a memory location for each integer in a list,
then uses the same function to access each location.
The primitives \lstinline|[]| and \lstinline|(::)| are the constructors for the empty list and element prepending,
while \lstinline|uncons : List a -> Unit + a * List a| unrolls the list constructor once.
\begin{lstlisting}
effect !ref : Int ~> $loc.
effect !get : $loc ~> Int.
effect !set : $loc * Int ~> Unit.
do map <- return (rec map f xs ->
  case uncons xs of {
    inl _        -> return [],
    inr (x, xs') -> f x :: map f xs'
  }) in
do vals <- return (10 :: 20 :: 30 :: []) in
do refs <- map (fun n -> !ref n) vals in
map (fun r -> !get r == 20) refs
\end{lstlisting}
Unbounded creation is the property that separates \cambria{} from the lexically scoped handler calculi of~\cite{biernacki2020binders}.
This program also uses parameters under let-polymorphism, applying the polymorphic \lstinline|map| to parameter-typed references.
\cambria{} is the first calculus in which this dynamic creation idiom enjoys a parametricity guarantee.
The body cannot distinguish handler implementations that differ in how \lstinline|$loc| is instantiated.

\subsection{Two Implementations of Probabilistic Urns} \label{sec:polya}

Parametrized handlers allow for different implementations of the same effect interface,
while preventing client code from inspecting the representation.
P\'olya's urn is a probabilistic process where an urn starts by containing a red ball and a blue ball.
After each sample from the urn, an additional ball of the sampled colour is placed back in the urn.
That is, in the following code, the value of \lstinline|b2| is more likely to match \lstinline|b1|.
\begin{lstlisting}
do urn <- !newurn () in
do b1 <- !sample urn in
do b2 <- !sample urn in
return (b1, b2)
\end{lstlisting}
In the probabilistic setting, \cambria{} has built-in effects \lstinline|!bernoulli : Double ~> Bool|, which returns \lstinline|true|
with probability given by the input clamped to $[0,1]$, and \lstinline|!uniform : Unit ~> Double|, which samples from the uniform distribution over $[0,1]$.
The former command can be used to handle the urn effect via the local state effect,
storing pairs of integers which track the number of red and blue balls.
\begin{lstlisting}
[$urn -> $loc] handler {
  newurn _ k -> k (!ref (1, 1)),
  sample urn k ->
    do (red, blue) <- !get urn in
    do isRed <- !bernoulli (red / (red + blue)) in
    do _ <- if isRed then !set (urn, (red + 1, blue))
                     else !set (urn, (red, blue + 1)) in
    k isRed
}
\end{lstlisting}
The parameter is instantiated to another parameter (\lstinline|$urn -> $loc|) rather than a concrete type,
composing the urn abstraction with the local state abstraction.

P\'olya showed that sampling from this urn admits a simple equivalent process~\cite{staton2018beta}:
draw a single bias $p$ uniformly from $[0,1]$ and then draw each sample independently from a Bernoulli distribution with that bias.
Hence, we can bypass local state to produce an implementation of P\'olya's urn by instantiating \lstinline|$urn| as a \lstinline|Double|.
This direct version is more efficient because it keeps no per-sample state and draws the bias once.
\begin{lstlisting}
[$urn -> Double] handler {
  newurn _ k -> k (!uniform ()),
  sample p k -> k (!bernoulli p)
}
\end{lstlisting}
Both handlers induce the same joint distribution on the returned pair.
The state-based handler gives $P(\text{both red}) = 1/2 \cdot 2/3 = 1/3$, while the direct handler gives $\mathbb{E}_{p\sim \Uniform[0,1]}[p^2] = 1/3$.
This is larger than the $1/4$ probability of independent, fair draws.
The client only sees \lstinline|$urn| and the operations \lstinline|!newurn| and \lstinline|!sample|.
By parametricity, it cannot inspect their implementation.
So, modulo a probabilistic lifting of the logical relation (see \Cref{sec:repind}), the implementations are indistinguishable by P\'olya's identity.

\subsection{Dynamically Threaded Concurrency} \label{sec:concurrency}

All of the previous examples treat each parameter as an independent effect instance.
Our final example of concurrent thread management~\cite{kammar2025equational} does not.
The \lstinline|!fork| operation uses multi-shot continuations to run both child and parent threads,
and the thread IDs pass between these processes while remaining abstract.
The parameter therefore coordinates distinct sub-computations rather than indexing an independent instance,
warranting a single handler to interpret all of them.
Another such effect is code jumps~\cite{staton2014jumps}, which appears as an example in the implementation.

A concurrent program has four operations:
\begin{itemize}
  \item \lstinline{!fork : Unit ~> $tid + Unit} duplicates the current process. The parent thread receives the thread ID of the child process. The child receives the unit value.
  \item \lstinline{!wait : $tid ~> Unit} stalls the current thread until the process of the provided thread ID finishes.
  \item \lstinline{!stop : Unit ~> Void} terminates the current thread.
  \item \lstinline{!act : Str ~> Unit}  performs a labelled observable action.
\end{itemize}
The thread IDs \lstinline|$tid| are abstract, so cannot be forged, compared, or inspected.
The following program constructs an N-shaped dependency between four actions, as visualized in \Cref{fig:deps}.

\begin{figure}
  \centering
  \includegraphics[width=5cm]{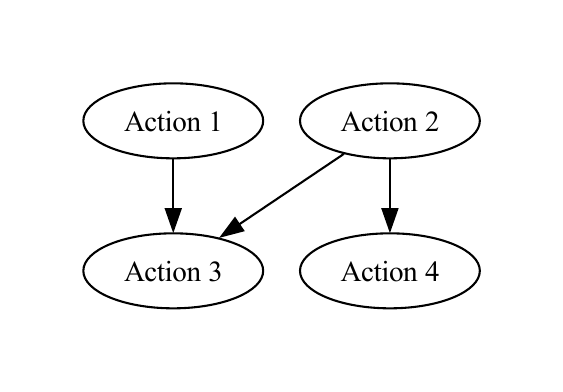}
  \caption{An N-shaped dependency between four actions, encoded by the dynamic thread management effect.}
  \label{fig:deps}
\end{figure}

\begin{lstlisting}
effect !fork : Unit ~> $tid + Unit.
effect !wait : $tid ~> Unit.
effect !stop : Unit ~> Void.
effect !act  : Str ~> Unit.
case !fork () of {
  inl a1 -> case !fork () of {
    inl a2 -> case !fork () of {
      inl a3 -> !wait a2 ; !act "Action 4" ; !stop (),
      inr _  -> !wait a1 ; !wait a2 ; !act "Action 3" ; !stop () },
    inr _  -> !act "Action 2" ; !stop () },
  inr _ -> !act "Action 1" ; !stop () }
\end{lstlisting}
We illustrate the expressiveness of parametrized handlers by handling this program in three different ways.
The first biases the child threads to run first, running the parent process once they are completed.
The second executes the parent process first, deferring each child process until it is waited on or the parent terminates.
The third handler is for diagnostics. It dispatches the concurrent operations to a secondary algebraic effect that encodes the action dependencies as a DAG.

In the child-biased implementation, the parent process only receives the child's thread ID after the child has terminated,
so any subsequent \lstinline|!wait| on that process is trivially satisfied.
Therefore, the \lstinline|!wait| operation becomes a no-op, and the thread IDs can be instantiated to \lstinline|Unit|.
The result of a returned thread is discarded, so the handler treats returning and stopping identically.
When handling the program above, the actions are executed in the order 1, 2, 3, 4.
\begin{lstlisting}
[$tid -> Unit] handler {
  return _ -> return (),
  act  s k -> !print s ; k (),
  wait _ k -> k (),
  stop _ k -> return (),
  fork _ k -> k (inr ()) ; k (inl ())
}
\end{lstlisting}
The parent-biased handler does not have the liberty of making \lstinline|!wait| a no-op.
It must run the child process if it hasn't already been executed.
An effective solution instantiates the thread ID as a thunk of the deferred child,
\lstinline|Unit -> Unit ! {...}|, where \lstinline|{...}| is shorthand for an effect signature left to be inferred.
The thunk is guarded by a local boolean reference ensuring the child runs at most once.
This instantiation may appear circular, because the captured continuation \lstinline|k|, which may invoke the concurrency operations, is called within the thunk.
But the continuation is resumed under the same handler, so those operations are caught, and the inferred effect signature only contains
\lstinline|!print| and the local-state effects \lstinline|!ref|, \lstinline|!set|, and \lstinline|!get|.
The latter can be handled similarly to those previously in \Cref{sec:local-state}, illustrating how parametrized handlers compose.
\begin{lstlisting}
[$tid -> Unit -> Unit ! {...}] handler {
  return _ -> return (),
  act  s k -> !print s ; k (),
  wait t k -> t () ; k (),
  stop _ k -> return (),
  fork _ k ->
    do hasRun <- !ref false in
    do thunk <- return (fun _ -> if !get hasRun then return ()
      else ( !set (hasRun, true) ; k (inr ()) )) in
    ( k (inl thunk) ; thunk () )
}
\end{lstlisting}
The \lstinline|wait| clause now runs its thunk before resuming, forcing the child if it hasn't already run.
At the end of the \lstinline|fork| clause, the child thunk is forced one last time to ensure it runs even if nothing has waited on it.
Returning and stopping are treated identically as before.

Applied to the same concurrent program above, the handler executes the actions in the order 2, 4, 1, 3.
This is different to the child-biased handler, but still consistent with the dependency diagram in \Cref{fig:deps}.
This second implementation relies on instantiating the thread ID type differently per handler.

If instead of running the program, we want a visual display of the action dependencies, we can implement a handler that converts the program into a DAG in Graphviz format.
We handle the concurrency effect into an intermediate pomset effect~\cite{kammar2025equational} that is parametrized by a type of nodes.
The only operation in this effect is \lstinline|!node : List $node * Str ~> $node|, which adds an extra node to the DAG.
The first input is the list of dependencies of the node and the second is the node's label.
The result type is the newly created node's identifier.
The handler is implemented by parameter passing the list of current dependencies through the computation.
Informally, each operation is implemented as follows:
\lstinline|act| calls the \lstinline|node| command with the current dependencies and provided label and passes on the newly created node,
\lstinline|wait| extends the current dependencies with the provided dependencies, \lstinline|return| and \lstinline|stop| forward the dependencies,
and \lstinline|fork| runs both the parent and child thread with the current dependencies, passing the thread ID of the child to the parent.
\begin{lstlisting}
[$tid -> List $node] handler {
  return _   -> return (fun ds -> return ds),
  act  s k   -> return (fun ds -> do a <- !node (ds, s) in k () (a :: [])),
  wait ds' k -> return (fun ds -> k () (concat ds ds')),
  stop _ k   -> return (fun ds -> return ds),
  fork _ k   -> return (fun ds -> do a <- k (inr ()) ds in k (inl a) ds),
  finally f  -> f []
}
\end{lstlisting}
The node effect can be handled by printing the DAG represented by the \lstinline|node| operations in Graphviz format.
Rather than using a user-defined implementation of the \lstinline|fresh| effect, we use \cambria{}'s built-in one,
which generates fresh instances of the \lstinline|Name| type.
The built-in primitive \lstinline|hash| converts each unique name to a string. The result is exactly the dependency visualization in \Cref{fig:deps}.
\begin{lstlisting}
[$node -> Name] handler {
  return _ -> return (fun _ -> return ()),
  node (ds, s) k -> return (fun _ -> do a <- !fresh () in (
    !print (hash a ++ " [label=\"" ++ s ++ "\"]") ; 
    map (fun d -> !print (hash d ++ " -> " ++ hash a)) ds ;
    k a () )),
  finally f -> !print "digraph {" ; f () ; !print "}"
}
\end{lstlisting}
Each of these three handlers instantiates the thread ID type to track completely different information.

\section{The Calculus}\label{sec:calculus}

The calculus we present for \cambria{} includes only the constructs strictly necessary to expose the metatheory of parametrized handlers
(parameters, parameter substitutions, and the parametrized handle form) along with general recursion and let-polymorphism.
The implementation of \cambria{} extends this calculus with standard data types and several surface conveniences, as discussed in \Cref{sec:surface}.
The examples of \Cref{sec:overview} make free use of these extensions.

\subsection{Types and Syntax} \label{sec:syntax}

The types of the language are inductively generated by the following grammar.
\[
\begin{array}{lrcl}
  \text{Value types}           & A, B   & ::= & \alpha \mid \psi \mid \tunit \mid \tfun{A}{\cC} \mid \thandler{\cC}{\cD} \\[4pt]
  \text{Computation types}     & \cC, \cD   & ::= & \comp{\forall \vec\alpha . A}{\Sig} \\[4pt]
  \text{Effect signatures}     & \Sig   & ::= & \emptyset \mid \Sig, \op : A \to B \\[4pt]
  \text{Type variable context} & \Theta & ::= & \eenv \mid \Theta, \alpha \\[4pt]
  \text{Parameter context}     & \Psi   & ::= & \eenv \mid \Psi, \psi \\[4pt]
  \text{Term context}          & \G     & ::= & \eenv \mid \G,\; x : \forall \vec{\alpha} . A
\end{array}
\]
There are three contexts: $\Theta$ for type variables, $\Psi$ for parameter variables, and $\Gamma$ for term variables.
The only extension to the type system beyond a standard handler calculus is the parameter variables $\psi$.
Parameter variables can be thought of as existential type variables, because they represent an implementation detail of the effect interface.
We do not allow universal quantification in operation arities, which prevents effects like heterogeneous state.
This scope permits the annotation completeness result for type inference (\Cref{sec:inference}), which would not hold in the polymorphic arity setting.
We foresee no issues extending the type-safety and parametricity results to polymorphic arities.

The syntax of values and computations is the following.
\[
\begin{array}{lrcl}
  \text{Values}       & v, w & ::= & x \mid \unit \mid \rec f\;x \mapsto c \mid h \\[4pt]
  \text{Handlers}      & h    & ::= &
    \hhandler\;\{
      \ret x \mapsto c_r,\;
      \op_1\; x\; k \mapsto c_1,\; \cdots,\;
      \op_m\; x\; k \mapsto c_m
  \} \\[4pt]
  \text{Computations} & c, d & ::= & \ret v \mid \op(v;\;x.c)\mid \ddo x \leftarrow c \iin d 
                                     \mid\; v\;w \mid \with \sigma, v \handle c \\[4pt]
  \text{Parameter subst.} & \sigma & ::= & \emptyset \mid \sigma[\psi \mapsto A]
\end{array}
\]
We write $\fun x \mapsto c$ as shorthand for $\rec f\;x \mapsto c$ when $f \notin \fv(c)$, where $\fv(c)$ are the free variables in $c$.
Similarly, we write $\hhandler\;\{ \ret x \mapsto c_r,\; \op\; x\; k \mapsto c_\op \}_{\op\in\Sig}$ as shorthand for the handler syntax where
$\Sig = \{\op_1 : A_1 \to B_1, \; \cdots\;, \op_m : A_m \to B_m \}$.
The key novelty is the \emph{parameter substitution} $\sigma$ in the handle term,
which bridges the abstract types in the body to their concrete representations in the handler.
We write $\dom(\sigma)$ for the parameter variables bound by $\sigma$, $\cod(\sigma)$ for the types they map to,
and $\fpv(X)$ for the free parameter variables in $X$.

\subsection{Surface Syntax}\label{sec:surface}

The \cambria{} surface language, used for the examples of \Cref{sec:overview}, departs from the above calculus in a few places.
These are primarily conveniences for the programmer, being more readable in practice.
Each maps directly onto a calculus term, so the metatheory is unaffected.

\paragraph{Finally clauses.}
The surface admits an optional $\mathsf{finally}\;x \mapsto c_f$ clause that runs after the handler returns.
A handler $h$ with a finally clause desugars at handle-time: $\with \sigma, h \handle c$ becomes $\ddo x \leftarrow (\with \sigma, \nofin{h} \handle c) \iin c_f$,
where $\nofin{h}$ is $h$ without the finally clause.
The result is a handle within a bind, so the metatheory carries through.

\paragraph{Operation syntax and effect declarations.}
The calculus writes an operation invocation as $\op(v;\;x.c)$, packaging the continuation $x.c$ syntactically.
The surface uses the generic effect~\cite{plotkin_algebraic_2003} $!\op\;v$, which desugars to $\op(v;\;x.\ret x)$. Hence, the calculus form $\op(v; \; x.c)$ is equivalent to
\[
  \ddo x \leftarrow {!\op}\;v \iin c.
\]
The surface additionally supports a local arity declaration \lstinline|effect !op : A ~> B. c| used only by type inference and discussed in \Cref{sec:inference}.

\paragraph{Surface conveniences.}
The remaining surface forms are sugar over the calculus.
These extensions are orthogonal to the metatheory, with parametricity (\Cref{thm:ft}), type safety (\Cref{cor:safety}),
and inference (\Cref{thm:relevant-complete}) transferring unchanged.
\begin{itemize}
  \item \emph{Pattern matching.} Irrefutable patterns $p ::= x \mid \_ \mid (p_1, p_2)$ are admitted in $\mathsf{do}$-bindings, function arguments, and operation clauses.
    Pair patterns desugar to projections.
  \item \emph{Multi-argument functions.} $\fun x_1\,x_2\,\ldots\,x_n \mapsto c$ and $\rec f\;x_1\,\ldots\,x_n \mapsto c$ desugar to nested unary abstractions.
  \item \emph{Sequencing.} $c_1\,;\,c_2$ desugars to $\ddo \_ \leftarrow c_1 \iin c_2$.
  \item \emph{Infix operators.} Arithmetic ($+$, $-$, $*$, $/$), equality ($==$), string concatenation ($+\!+$),
    and cons ($\,::\,$) desugar to applications of named primitives.
  \item \emph{Implicit $\mathsf{do}$.} A computation in a value position is hoisted into a fresh $\mathsf{do}$-binding around its enclosing computation.
    These are hoisted in left-to-right order. For example, $c_f \; c_x$ elaborates to $\ddo f \leftarrow c_f \iin (\ddo x \leftarrow c_x \iin f \; x)$.
  \item \emph{Default return clause.} A handler with no explicit $\mathsf{return}$ clause uses the identity $\ret x \mapsto \ret x$.
  \item \emph{Built-in types and operations.} The implementation provides $\tbool$, $\tint$, $\tdouble$, $\tstring$, $\tname$, $\void$,
    products, sums, lists, and maps, along with a small library of primitive operations including \lstinline|!fresh|, \lstinline|!print|, \lstinline|!bernoulli|, and \lstinline|!uniform|.
\end{itemize}

\subsection{Typing Rules}\label{sec:typing}

\newcommand{\calcsize}{\small\renewcommand{\TirNameStyle}[1]{\footnotesize\textsc{##1}}}

\begin{figure}
  \framebox{
    \begin{minipage}{0.95\linewidth}
      Type judgements.
      {\calcsize
      \begin{mathpar}
        \inferrule{\alpha \in \Theta}{\tj \Theta \Psi \alpha}

        \inferrule{\psi \in \Psi}{\tj \Theta \Psi \psi}

        \inferrule{ }{\tj \Theta \Psi \tunit}

        \inferrule{ \tj \Theta \Psi A \and \tj \Theta \Psi \cC }{\tj \Theta \Psi \tfun{A}{\cC}}

        \inferrule{ \tj \Theta \Psi \cC \and \tj \Theta \Psi \cD }{\tj \Theta \Psi \thandler{\cC}{\cD}}

        \inferrule{ [\tj \Theta \Psi {A_\op} \quad \tj \Theta \Psi {B_\op} ]_{(\op : A_\op \to B_\op) \in \Sig} }{\tj \Theta \Psi \Sig}

        \inferrule{ \tj \Theta \Psi A \and \tj \Theta \Psi \Sig }{\tj \Theta \Psi {\comp{A}{\Sig}}}

        \inferrule{ \tj {\Theta, \vec{\alpha}} \Psi A }{\tj \Theta \Psi {\forall\vec\alpha.A}}

        \inferrule{ [\tj \Theta \Psi {\forall\vec\alpha.A}]_{(x:\forall \vec\alpha.A) \in \Gamma} }{\tj \Theta \Psi \Gamma}
      \end{mathpar}
      }
      Value judgements.
      {\calcsize
      \begin{mathpar}
        \inferrule*[right=Var]
        {(x : \forall \vec{\alpha}.B) \in \G \\ [\tj \Theta \Psi {A_\alpha}]_{\alpha\in\vec\alpha}}
        {\vj{\Theta}{\Psi}{\G}{x}{B[A_\alpha / \alpha]_{\alpha\in\vec\alpha}}}

        \inferrule*[right=Unit]
          { }
          {\vj{\Theta}{\Psi}{\G}{\unit}{\tunit}}

        \inferrule*[right=Rec]
          {\cj{\Theta}{\Psi}{\G, f : \tfun{A}{\comp{B}{\Sig}},\, x : A}{c}{B}{\Sig}}
          {\vj{\Theta}{\Psi}{\G}{\rec f\; x \mapsto c}{\tfun{A}{\comp{B}{\Sig}}}}

        \inferrule*[right=Handler]{
          \cj{\Theta}{\Psi}{\G, x : A}{c_r}{B}{\Sig'}
          \\
          \Sig \setminus \Sig'' \subseteq \Sig'
          \\\\
          \left[
            \cj{\Theta}{\Psi}{\G,\, x : A_\op,\, k : \tfun{B_\op}{\comp{B}{\Sig'}}}{c_\op}{B}{\Sig'}
          \right]_{(\op:A_\op \to B_\op) \in \Sig''}
          }
          {\vj{\Theta}{\Psi}{\G}{
            \hhandler\;\{\ret x \mapsto c_r,\;
              \op\; x\; k \mapsto c_\op
            \}_{\op \in\Sig''}
          }{\thandler{\comp{A}{\Sig}}{\comp{B}{\Sig'}}}}
      \end{mathpar}
      }
      Computation judgements.
      {\calcsize
      \begin{mathpar}
        \inferrule*[right=Return]
        {\vj{\Theta}{\Psi}{\G}{v}{A}}
        {\cj{\Theta}{\Psi}{\G}{\ret v}{A}{\Sig}}

        \inferrule*[right=Gen]{
          \cj {\Theta, \vec{\alpha}} \Psi \G c A \Sig
        }{
          \cj \Theta \Psi \G c {\forall \vec{\alpha}.A} \Sig
        }

        \inferrule*[right=App]
        {\vj{\Theta}{\Psi}{\G}{v_1}{\tfun{A}{\cC}}
          \\
          \vj{\Theta}{\Psi}{\G}{v_2}{A}
          }
          {\cjp{\Theta}{\Psi}{\G}{v_1\; v_2}{\cC}}

        \inferrule*[right=Do]
        {\cj{\Theta}{\Psi}{\G}{c_1}{\forall \vec{\alpha}. A}{\Sig}
          \\
          \cj{\Theta}{\Psi}{\G, x : \forall\vec{\alpha}.A}{c_2}{B}{\Sig}
          }
          {\cj{\Theta}{\Psi}{\G}{\ddo x \leftarrow c_1 \iin c_2}{B}{\Sig}}

        \inferrule*[right=Op]
          {(\op : \arity{A_{\op}}{B_{\op}}) \in \Sig
          \\
          \vj{\Theta}{\Psi}{\G}{v}{A_{\op}}
          \\
          \cj{\Theta}{\Psi}{\G,\, x : B_{\op}}{c}{A}{\Sig}
          }
          {\cj{\Theta}{\Psi}{\G}{\op(v;\;x.c)}{A}{\Sig}}

        \inferrule*[right=ParHandle]{
          \vj{\Theta}{\Psi}{\G}{v}{\thandler{\cC[\sigma]}{\cD}}
          \\
          \cjp{\Theta}{\Psi, \dom(\sigma)}{\G}{c}{\cC}
          \\
          \left[\tj \Theta \Psi A\right]_{A\in\cod(\sigma)}
          }{
            \cjp{\Theta}{\Psi}{\G}{\with \sigma, v \handle c}{\cD}
          }
      \end{mathpar}
      }
    \end{minipage}
  }
  \caption{Typing rules for \cambria{}.}
  \label{fig:typing-rules}
\end{figure}

The typing rules for \cambria{} are given in \Cref{fig:typing-rules}.
A type judgement $\tj \Theta \Psi A$ states that the type $A$ is well-formed in type variable context $\Theta$ and parameter context $\Psi$.
Value judgements are of the form $\vj \Theta \Psi \Gamma v A$ and computation judgements are of the form $\cj \Theta \Psi \Gamma c A \Sig$.
Both implicitly enforce that $\tj \Theta \Psi \Gamma$ and $\tj \Theta \Psi A$, and the computation judgement additionally enforces that $\tj \Theta \Psi \Sig$.
In particular, no $\alpha \in \vec\alpha$ appears free in $\Gamma$ or $\Sig$ in the \textsc{Gen} rule.
This permits let-polymorphism without any value restriction~\cite{kammar_no_2017}.
Moreover, parameters are never generalized and remain rigid within the body of a handler so that the handler can instantiate them consistently.
Otherwise the abstraction boundary would break.
In the \textsc{Handler} rule, handled operations do not need to cover the full set of operations in the computation effect signature.
The remaining operations are \emph{forwarded}, which is indicated by $ \Sig \setminus \Sig'' \subseteq \Sig' $ in the premise.
The continuation captured by an operation is bound as an ordinary variable,
so an operation clause may resume it any number of times, permitting \emph{multi-shot continuations}.

The only non-standard type judgement is the \textsc{ParHandle} rule.
The rule mirrors a classic handle rule, but the handler's input type matches the body's type only after the parameter substitution is applied to it.
When the parameter substitution is empty ($\sigma = \emptyset$), the rule becomes classic handler application.
When non-empty, the parameters introduced by $\sigma$ remain confined to the handled body.
That is, for $\psi \in \dom(\sigma)$, $\psi \notin \fpv(\cD, \Gamma)$, since $\cD$ and $\Gamma$ are implicitly well-formed in $\Theta;\Psi$.
An operation whose arity mentions $\psi$ may still be forwarded, but it is then exposed at its instantiated arity, so $\psi$ does not escape.
Our examples always handle such operations, but the metatheory does not require it.

\subsection{Operational Semantics and Type Safety}

Evaluation contexts identify where the next reduction step occurs.
\[
  E \;::=\; \square
    \mid \ddo x \leftarrow E \iin c
    \mid \with \sigma, v \handle E
\]
Single-step reduction $c \leadsto c'$ is defined below, closed under the congruence rule: if $c \leadsto c'$ then $E[c] \leadsto E[c']$.
\begin{align*}
  (\rec f\;x \mapsto c)\;v
    &\quad\leadsto\quad c[v/x,\;(\rec f\;x \mapsto c)/f] \\[-3pt]
  \ddo x \leftarrow \ret v \iin c
    &\quad\leadsto\quad c[v/x] \\[-3pt]
  \ddo x \leftarrow \op(v; y.c) \iin c'
    &\quad\leadsto\quad \op(v; y.\ddo x \leftarrow c \iin c')
\end{align*}
For handler reduction, let $h = \hhandler\;\{\ret x \mapsto c_r,\;\op\;x\;k \mapsto c_\op\}_{\op\in\Sig}$. Then handle terms reduce as follows.
\begin{align*}
  \with \sigma, h \handle \ret v
    &\quad\leadsto\quad c_r[v/x] \\[-3pt]
  \with \sigma, h \handle \op(v;\;y.c)
    &\quad\leadsto\quad c_\op[v/x,\;(\fun y \mapsto \with \emptyset, h \handle c)/k]
    \tag{$\op \in \Sig$} \\[-3pt]
  \with \sigma, h \handle \op(v;\;y.c)
    &\quad\leadsto\quad \op(v;\;y.\;\with \emptyset, h \handle c)
    \tag{$\op \notin \Sig$}
\end{align*}
Reduction is deterministic because reducible computations decompose uniquely into an evaluation context and a redex.
The continuation supplied to the operation clause reinstates the handler, giving \emph{deep handling}~\cite{kammar_handlers_2013}.
In reinstated continuations, we discard the parameter substitution because it plays no role in reduction.
This is the formal sense in which parameter variables are erased at runtime. There is no dynamic dispatch, no coercions, and no type-directed reduction.
The parameter annotations are only used for type checking.

\begin{proposition}[Parameter Erasure] \label{prop:erasure}
  Define a parameter erasure map $\lfloor {-} \rfloor$ as
  \[
    \lfloor \with \sigma, h \handle c \rfloor = \with \emptyset, \lfloor h \rfloor \handle \lfloor c \rfloor
  \]
  and structural identity otherwise. Then $c$ is reducible if and only if $\lfloor c \rfloor$ is reducible.
  Moreover, if $c \leadsto c'$ then $\lfloor c \rfloor \leadsto \lfloor c' \rfloor$.
\end{proposition}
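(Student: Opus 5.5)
The plan is a structural argument in three stages, with a small substitution lemma first. The erasure map $\lfloor{-}\rfloor$ acts homomorphically on every value and computation former. The only nontrivial clause replaces the substitution in each handle node by $\emptyset$, and since $\sigma$ contains only types, no term variables are involved.

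\textbf{Step 1: substitution commutes with erasure.} By mutual induction on values, handlers and computations, I will show
\[
  \lfloor c[v/x] \rfloor = \lfloor c\rfloor[\lfloor v\rfloor/x],
\]
and the analogous statement for values. I will also show that erasure is idempotent and that $\lfloor \with \emptyset, h \handle c\rfloor = \with \emptyset, \lfloor h\rfloor \handle \lfloor c\rfloor$.

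\textbf{Step 2: erasure commutes with decomposition.} I lift erasure to evaluation contexts, setting $\lfloor \with\sigma, v\handle E\rfloor = \with\emptyset,\lfloor v\rfloor\handle\lfloor E\rfloor$. Then $\lfloor E[c]\rfloor = \lfloor E\rfloor[\lfloor c\rfloor]$ holds by induction on $E$. Conversely, every evaluation context of $\lfloor c\rfloor$ arises this way, because erasure preserves the shape of every node: a handle node stays a handle node, a $\mathsf{do}$ node stays a $\mathsf{do}$ node, and so on. Hence decompositions $c = E[r]$ correspond bijectively to decompositions $\lfloor c\rfloor = \lfloor E\rfloor[\lfloor r\rfloor]$.

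\textbf{Step 3: the redex cases.} By case analysis on the six redex shapes I will check that $r$ is a redex iff $\lfloor r\rfloor$ is, and that $r\leadsto r'$ implies $\lfloor r\rfloor\leadsto\lfloor r'\rfloor$.
\begin{itemize}
\item \emph{$\beta$ and $\mathsf{do}$-return.} These follow immediately from Step 1.
\item \emph{$\mathsf{do}$-op commutation.} This is purely structural.
\item \emph{The three handle rules.} The side conditions $\op\in\Sig$ and $\op\notin\Sig$ depend only on the handler's clauses, and erasure does not touch those clause names; it only erases inside their bodies.
\end{itemize}
In the handled-op case, the reduct is $c_\op[v/x,(\fun y\mapsto \with\emptyset,h\handle c)/k]$. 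Its erasure is $\lfloor c_\op\rfloor[\lfloor v\rfloor/x,(\fun y\mapsto\with\emptyset,\lfloor h\rfloor\handle\lfloor c\rfloor)/k]$, which is exactly the reduct of the erased redex. This is where the design choice of reinstating with $\emptyset$ pays off. The forwarding case is similar. Combining with congruence closure and Step 2 gives the "moreover" clause. The "if and only if" follows from the bijective correspondence of decompositions: $c$ is reducible iff some decomposition has a redex in the hole.

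\textbf{Main obstacle.} The one delicate point is reflection of reducibility, i.e.\ that $\lfloor c\rfloor$ reducible implies $c$ reducible. Two things must be ruled out: erasure creating a redex out of a non-redex, and erasure exposing a different decomposition. I will handle this by proving explicitly that erasure is injective on term shapes at the head: if $\lfloor c\rfloor$ has head form $\ret w$, $\op(w;y.d)$, an application, a $\mathsf{do}$, or a handle, then $c$ has the same head form with erased subterms. Reducibility depends only on such head forms, on the values in function position, and on the handler clause names, so this shape-inversion lemma, proved by inspection of the definition of $\lfloor{-}\rfloor$, carries the backward direction.
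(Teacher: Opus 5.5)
Your proposal is correct and takes the same route as the paper. The paper's proof is a single line: no reduction rule inspects $\sigma$, so erasing gives a valid reduction by the same rule. Your three lemmas (erasure commuting with substitution, the decomposition correspondence, and head-shape inversion for the reflection direction) spell out the details that this observation leaves implicit.
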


\begin{proof}
  No rule inspects $\sigma$, so erasing the parameter substitutions gives a valid reduction by the same rule, establishing both results.
\end{proof}

Type safety of a language states that well-typed terms do not get stuck.
In the context of effect systems, this means a well-typed computation has reduced to a value,
is \emph{control-stuck} on an unhandled operation in its effect signature, or can take a reduction step.
This result rests on a preservation lemma that usually says reduction preserves a program's type.
In \cambria{}, to keep reduction type-agnostic, \textsc{ParHandle} reductions discard the parameter substitution.
This gives a weaker notion of preservation, but still enforces type safety.

\begin{theorem}[Progress and Preservation] \label{thm:pp}
  If $\cj{\eenv}{\eenv}{\eenv}{c}{A}{\Sig}$, then either:
  \begin{enumerate}
    \item $c = \ret v$ for some $\vj{\eenv}{\eenv}{\eenv}{v}{A}$
    \item $c = \op(v;\; y.c')$ for some $\op : A_\op \to B_\op \in \Sig$, $\vj{\eenv}{\eenv}{\eenv}{v}{A_\op}$,
      and $\cj{\eenv}{\eenv}{y:B_\op}{c'}{A}{\Sig}$.
    \item $c \leadsto d$ and there is some $\cj{\eenv}{\eenv}{\eenv}{d'}{A}{\Sig}$ such that $\lfloor d' \rfloor = \lfloor d \rfloor$.
  \end{enumerate}
\end{theorem}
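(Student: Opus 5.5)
We prove the statement by induction on the typing derivation of $\cj{\eenv}{\eenv}{\eenv}{c}{A}{\Sig}$. To let the induction pass through \textsc{ParHandle}, where the body is typed in a nonempty parameter context, we generalise it to closed terms typed in an arbitrary parameter context $\Psi$, with no type or term variables. Alongside this we prove the usual auxiliary lemmas: weakening; value substitution, including substitution of generalised values for polymorphic variables (instantiating $\forall\vec\alpha$ via type substitution, which is needed for \textsc{Do} with \textsc{Gen}); type substitution for type variables; and \emph{parameter substitution}. The last says that if $\cjp{\Theta}{\Psi,\dom(\sigma)}{\G}{c}{\cC}$ and every type in $\cod(\sigma)$ is well formed in $\Theta;\Psi$, then $\cjp{\Theta}{\Psi}{\G[\sigma]}{c}{\cC[\sigma]}$. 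This holds because parameters are rigid and no rule inspects them; it is a straightforward induction, with the \textsc{ParHandle} case composing substitutions.

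The case analysis runs as follows. \textsc{Gen} is immediate from the induction hypothesis, since the generalised variables do not occur in $\Sig$. \textsc{Return} and \textsc{Op} give cases (1) and (2) directly. \textsc{App} uses a canonical forms lemma: a closed value of function type is a $\rec$, so the term $\beta$-reduces, and value substitution types the result, taking $d'=d$. For \textsc{Do}, apply the induction hypothesis to $c_1$:
\begin{itemize}
\item if $c_1$ is a return, substitute;
\item if $c_1$ is an operation, reassociate, which is typable by weakening;
\item if $c_1 \leadsto d_1$ with witness $d_1'$, then $\ddo x\leftarrow d_1 \iin c_2$ has witness $\ddo x \leftarrow d_1' \iin c_2$, and erasure commutes with contexts.
\end{itemize}

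The interesting case is \textsc{ParHandle}. The term is $\with \sigma, v \handle c$ with $v:\thandler{\cC[\sigma]}{\cD}$, and $c$ is typed in $\Psi,\dom(\sigma)$. By canonical forms, $v=h$ is a handler. Apply the induction hypothesis to $c$. If $c\leadsto d$ with witness $d'$, we use $\with\sigma,h\handle d'$, which has erasure $\lfloor\with\sigma,h\handle d\rfloor$. If $c=\ret w$, parameter substitution gives $w:A[\sigma]$ in context $\Psi$, so $c_r[w/x]$ types at $\cD$. If $c=\op(w;y.c'')$ with $\op\in\Sig$, then $\op$ has instantiated arity $A_\op[\sigma]\to B_\op[\sigma]$ in $\Sig[\sigma]$. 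There are two sub-cases:
\begin{itemize}
\item If $\op$ is handled, the reduct is $c_\op[w/x,(\fun y\mapsto \with\emptyset,h\handle c'')/k]$. It is not literally typable, because $c''$ mentions parameters in $\dom(\sigma)$ while $\sigma$ has been discarded. So we choose as witness $d'$ the same term with $\with\sigma,h\handle c''$ inside the continuation. This is typable by \textsc{ParHandle} under $\Psi$ with $y:B_\op[\sigma]$, combined with parameter substitution on $w$ and value substitution into $c_\op$, and its erasure agrees with that of $d$.
\item If $\op$ is forwarded, the witness is $\op(w;y.\with\sigma,h\handle c'')$, typed at the instantiated arity, which lies in the output signature by $\Sig\setminus\Sig''\subseteq\Sig'$.
\end{itemize}

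The main obstacle is that the generalised statement must also absorb erasure mismatches produced by earlier steps. Consider a term $\with\emptyset,h\handle c''$ created by a previous reduction: its body $c''$ is not typable at the empty substitution, so the induction hypothesis cannot be applied to it at all. We therefore keep the statement over typable representatives, with the witness $d'$ always re-inserting the original $\sigma$, as in the handled-operation case above. To justify this, we prove a small lemma that the erasure relation $\lfloor d\rfloor=\lfloor d'\rfloor$ is compatible with substitution and with evaluation contexts. This ensures the reducts we construct in each case genuinely agree with $d$ up to erasure, by \Cref{prop:erasure}.
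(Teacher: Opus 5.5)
Your overall plan matches the paper's: induction generalised beyond the empty parameter context, substitution lemmas, canonical forms, and witnesses that agree with the reduct up to erasure. Your congruence witness $\with\sigma,h\handle d'$ is fine. However, the parameter substitution lemma you state is false, and the witnesses that rely on it fail.

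Terms are not type-free. Each nested handle carries a substitution whose codomain may mention parameters in $\dom(\sigma)$. A subterm $\with\{\psi'\mapsto\psi\},h'\handle c'$ with $\psi\in\dom(\sigma)$ violates the $\cod$ premise of \textsc{ParHandle} once $\psi$ leaves the context. So such a $c$ cannot be typed in $\Theta;\Psi$, and the conclusion must be about $c[\sigma]$. Your ``composing substitutions'' remark already presupposes that the term changes.

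This breaks every witness that lets a value $w$ leave the scope of $\sigma$: the return case $c_r[w/x]$, and the operation argument $w$ in both operation cases, which your lemma leaves unchanged. Concretely, let $h,h'$ both be $\hhandler\,\{\ret x\mapsto\ret x\}$ and $w=\fun z\mapsto\with\{\psi'\mapsto\psi\},h'\handle\ret\unit$. Then $\with\{\psi\mapsto\tunit\},h\handle\ret w$ is closed and well typed at $\comp{(\tfun{\tunit}{\comp{\tunit}{\emptyset}})}{\emptyset}$. It steps to $\ret w$, which is exactly your witness, yet it is untypable because $w$ still contains $\psi$. Re-inserting $\sigma$ cannot help, since no handle surrounds a value that has already left the handler.

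The missing idea is the paper's. Parameter substitution acts on terms, by substituting into the codomains of nested substitutions, and it is invisible to erasure: $\lfloor w[\sigma]\rfloor=\lfloor w\rfloor$. The witnesses then use $w[\sigma]$, e.g.\ $c_r[w[\sigma]/x]$. For the continuation, the paper uses $\fun y\mapsto\with\emptyset,h\handle c''[\sigma]$, typed by the term-level lemma with $y:B_\op[\sigma]$.

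You can keep your re-wrapped continuation $\fun y\mapsto\with\sigma,h\handle c''$, but not via your lemma. \textsc{ParHandle} shares the context $y:B_\op[\sigma]$ with its body. So you need $c''$ typed in $\Psi,\dom(\sigma)$ with $y:B_\op[\sigma]$ at $\cC[\sigma]$, which $\sigma$ fixes and which therefore matches the handler's input. That requires a separate lemma: substitute $\sigma$ into the types of a derivation while keeping $\dom(\sigma)$ in the context and leaving the term unchanged. Its nested \textsc{ParHandle} case extends the substitution by $\psi'\mapsto A'[\sigma]$ for each $\psi'\mapsto A'$ in the nested substitution.

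Finally, your generalised statement must also allow a nonempty $\Theta$, because the premise of \textsc{Gen} extends $\Theta$ by $\vec\alpha$.
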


\begin{proof}
  We first show progress by induction over the computation derivations, with canonical forms forcing the reduction steps if not a value or control-stuck operation.
  For preservation, we prove type substitution, parameter substitution, and variable substitution lemmas by structural induction:
  \begin{itemize}
    \item If $\cj{\Theta,\alpha}{\Psi}{\Gamma}{c}{A}{\Sig}$ and $\Theta;\Psi \vdash B$, then $\cjp{\Theta}{\Psi}{\Gamma[B / \alpha]}{c[B/ \alpha]}{(\comp{A}{\Sig})[B/ \alpha]}$.
    \item If $\cj{\Theta}{\Psi,\psi}{\Gamma}{c}{A}{\Sig}$ and $\Theta;\Psi \vdash B$, then $\cjp{\Theta}{\Psi}{\Gamma[B / \psi]}{c[B/ \psi]}{(\comp{A}{\Sig})[B/ \psi]}$.
    \item If $\cj{\Theta}{\Psi}{\Gamma, x:\forall \vec{\alpha}.B}{c}{A}{\Sig}$ and $\vj{\Theta,\vec{\alpha}}{\Psi}{\Gamma}{v}{B}$, then $\cjp{\Theta}{\Psi}{\Gamma}{c[v / x]}{\comp{A}{\Sig}}$.
  \end{itemize}
  Then, for non-handler reduction, preservation follows from $d'$ instantiated to $d$.
  For handler reductions, the witness $d'$ reinstates the parameter substitution the reduction discarded.
  \begin{itemize}
    \item For $d = c_r[v / x]$, we take $d' = c_r[v[\sigma] / x]$.
    \item For $d = c_\op[v / x, (\fun y \mapsto \with \emptyset, h \handle c) / k]$, we take $d' = c_\op[v[\sigma] / x, (\fun y \mapsto \with \emptyset, h \handle c[\sigma]) / k]$.
    \item For $d = \op(v; \; y. \with \emptyset, h \handle c)$, we take $d' = \op(v[\sigma]; \; y. \with \emptyset, h \handle c[\sigma])$.
  \end{itemize}
  Then, since $\lfloor v[\sigma] \rfloor = \lfloor v \rfloor$ and $\lfloor c[\sigma] \rfloor = \lfloor c \rfloor$,
  and erasure commutes with substitution, $\lfloor d' \rfloor = \lfloor d \rfloor$ in all cases.
  By the substitution lemmas and \textsc{ParHandle} rule, these witnesses are well-typed at $\cj{\eenv}{\eenv}{\eenv}{d'}{A}{\Sig}$.
\end{proof}

In particular, if $\cj{\eenv}{\eenv}{\eenv}{c}{A}{\emptyset}$, then case (2) is vacuous and the result reduces to the computation being a value or taking a step.

\begin{corollary}[Type Safety]\label{cor:safety}
  If $\cj{\eenv}{\eenv}{\eenv}{c}{A}{\Sig}$ and $c \leadsto^* d$, then $d$ is a value, control-stuck on an operation in $\Sig$, or reducible.
\end{corollary}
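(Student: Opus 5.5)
The plan is to derive the corollary from \Cref{thm:pp} together with \Cref{prop:erasure}, by induction on the length of the reduction sequence $c \leadsto^* d$. The statement being proved by induction has to be strengthened: we cannot claim that every reduct $d$ is itself well-typed, because preservation in \Cref{thm:pp} only supplies a well-typed $d'$ with $\lfloor d' \rfloor = \lfloor d \rfloor$. I will therefore prove the following invariant: whenever $c \leadsto^* d$, there exists $d'$ with $\cj{\eenv}{\eenv}{\eenv}{d'}{A}{\Sig}$ and $\lfloor d' \rfloor = \lfloor d \rfloor$.

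The base case is immediate with $d' = c$. For the inductive step, suppose $c \leadsto^* d \leadsto e$ and we have a well-typed $d'$ with $\lfloor d' \rfloor = \lfloor d \rfloor$. By \Cref{prop:erasure}, from $d \leadsto e$ we get $\lfloor d \rfloor \leadsto \lfloor e \rfloor$, so $\lfloor d' \rfloor$ is reducible. The converse direction of \Cref{prop:erasure} then shows that $d'$ is reducible, say $d' \leadsto e'$. Applying \Cref{prop:erasure} again gives $\lfloor d' \rfloor \leadsto \lfloor e' \rfloor$. Since reduction is deterministic (unique decomposition into evaluation context and redex, which holds equally for erased terms), we conclude $\lfloor e' \rfloor = \lfloor e \rfloor$. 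Applying \Cref{thm:pp} to $d'$ excludes cases (1) and (2), because $d'$ reduces. Case (3) then yields a well-typed $e''$ with $\lfloor e'' \rfloor = \lfloor e' \rfloor = \lfloor e \rfloor$, which closes the induction.

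For the conclusion, take the well-typed $d'$ that the invariant provides for $d$ and apply \Cref{thm:pp} to it. The result is that $d'$ is a value $\ret v$, or control-stuck as $\op(v;\,y.c')$ with $\op \in \Sig$, or reducible. Erasure is structural identity on these head shapes: $\lfloor \ret v \rfloor = \ret \lfloor v \rfloor$ and $\lfloor \op(v;y.c') \rfloor = \op(\lfloor v\rfloor; y.\lfloor c'\rfloor)$, and no other constructor erases to these forms. So the equality $\lfloor d \rfloor = \lfloor d' \rfloor$ forces $d$ to have the same head shape as $d'$: a return, or the same operation $\op \in \Sig$. In the remaining case, reducibility transfers from $d'$ to $d$ through both directions of \Cref{prop:erasure}.

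The main obstacle is the determinism step, which is needed to match $\lfloor e' \rfloor$ with $\lfloor e \rfloor$. The paper asserts determinism for the original calculus; I would argue it for erased terms in the same way, since erasure preserves the evaluation-context/redex decomposition. As an alternative that avoids determinism entirely, I would strengthen the inductive claim to relate whole erased reduction sequences and use the preservation witnesses directly. This works because every witness constructed in the proof of \Cref{thm:pp} reduces in the same way as the term it replaces, up to erasure.
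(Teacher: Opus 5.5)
Your proof is correct and follows essentially the same argument as the paper. Both induct on the length of $c \leadsto^* d$, maintaining a well-typed witness at $\comp{A}{\Sig}$ with the same parameter erasure as the reduct, use \Cref{prop:erasure} and determinism to match successors, and apply \Cref{thm:pp} to the final witness. Your explicit transfer of the head shape (return, operation, or reducible) from $d'$ back to $d$ fills in a step the paper leaves implicit. The determinism worry is unnecessary, since erased terms are ordinary \cambria{} terms (with $\sigma = \emptyset$), so the paper's determinism claim applies to them directly.
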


\begin{proof}
  By induction on the reduction length, maintaining the invariant that every reduct carries a well-typed witness at $\comp{A}{\Sig}$ erasing to it.
  The base case is $c$, and \Cref{thm:pp} supplies the witness for each successor.
  \Cref{prop:erasure} and determinism force the actual reduct and the witness to share a parameter-erased successor,
  and \Cref{thm:pp} applied to the witness of $d$ provides the trichotomy.
\end{proof}

\section{Parametricity}\label{sec:parametricity}

This section formalizes the abstraction guarantee that parametrized handlers provide to client code.
We prove a parametricity result using a step-indexed binary logical relation (\Cref{thm:ft}).
Parametrized handlers decompose into classic handlers and existential types.
Thus, the logical relation reduces to a standard combination of handlers, universal types, existential types, and recursion, with each type constructor contributing an independent clause.
Since parameters are erased and their abstraction is captured by a single relation, the construction is direct,
needing neither a biorthogonal definition of the computation relation~\cite{biernacki2018handle},
nor Kripke worlds which are required for a full mutable heap~\cite{ahmed2009state}.

We formalize the decomposition as an elaboration into a core calculus (\Cref{sec:elaboration}) and prove the parametricity theorem on the core.
Adequacy transports the result back to \cambria{}.

\subsection{The Core Language}

Parametrized handlers can intuitively be thought of as first packing the concrete parameter instantiation into an existential type,
and then handling the abstract computation.
That is, a handle with a single parameter substitution elaborates along the following lines.
\begin{align*}
  \with \{\psi \mapsto A\}, v \handle c \quad \mapsto &\quad \unpack \pack(\psi, A, v) \as (\psi, x) \iin (\with x \handle c)
\end{align*}
This decomposition is formalized in \Cref{sec:elaboration}. Before proceeding, we introduce the syntax of the core language, the target language of the elaboration.
We update the handle syntax and extend the value types, values and computations with the following constructs in place of
the parametrized handle term ($\with \sigma, v \handle c$) in \Cref{sec:syntax}.
\[
\begin{array}{lrcl}
  \text{Value types}  & A, B & ::= & \cdots \mid \exists\psi.A \\[4pt]
  \text{Values}       & v, w & ::= & \cdots \mid \pack(\psi, A, v) \\[4pt]
  \text{Computations} & c, d & ::= & \cdots \mid \with v \handle c \mid \unpack v \as (\psi, x) \iin c
\end{array}
\]
Parameter substitutions $\sigma$ are no longer part of the syntax in the core language.
Core judgements, denoted with $\vdashc$, have a new typing derivation for existential types and, in place of the \textsc{ParHandle} rule,
we have a plain handle rule, as well as existential introduction and elimination rules.
\begin{mathpar}
  \inferrule*{\tjc \Theta {\Psi, \psi} A }{ \tjc \Theta \Psi {\exists\psi.A}}

  \inferrule*[right=Unpack]{
    \vjc{\Theta}{\Psi}{\G}{v}{\exists \psi.A}
    \\
    \vjc{\Theta}{\Psi,\psi}{\G, x:A}{c}{\cC}
  }{
    \cjpc{\Theta}{\Psi}{\G}{\unpack v \as (\psi, x) \iin c}{\cC}
  }

  \inferrule*[right=Handle]{
    \vjc{\Theta}{\Psi}{\G}{v}{\thandler{\cC}{\cD}}
    \\
    \cjpc{\Theta}{\Psi}{\G}{c}{\cC}
  }{
    \cjpc{\Theta}{\Psi}{\G}{\with v \handle c}{\cD}
  }

  \inferrule*[right=Pack]{
    \tjc{\Theta}{\Psi}{A}
    \\
    \vjc{\Theta}{\Psi}{\G}{v}{B[A / \psi]}
  }{
    \vjc{\Theta}{\Psi}{\G}{\pack(\psi, A, v)}{\exists \psi.B}
  }
\end{mathpar}
In the \textsc{Unpack} rule, $\psi \notin \fpv(\cC, \G)$ because $\tjc{\Theta}{\Psi}{\G}$ and $\tjc{\Theta}{\Psi}{\cC}$ implicitly hold in the conclusion.

The operational semantics replaces parametrized handler reduction with classic handler reduction and adds a new reduction rule for existential terms,
inheriting the remaining rules.
Let $h = \hhandler\;\{\ret x \mapsto c_r,\;\op\;x\;k \mapsto c_\op\}_{\op\in\Sig}$, then the core language reductions on the new constructs are as follows.
\begin{align*}
  \with h \handle \ret v
    &\quad\corestep_p\quad c_r[v/x] \\[-3pt]
  \with h \handle \op(v;\;y.c)
    &\quad\corestep_p\quad c_\op[v/x,\;(\fun y \mapsto \with h \handle c)/k]
    \tag{$\op\in\Sig$} \\[-3pt]
  \with h \handle \op(v;\;y.c)
    &\quad\corestep_p\quad \op(v;\;y.\;\with h \handle c)
    \tag{$\op\notin\Sig$} \\[-3pt]
  \unpack \pack(\psi, A, v) \as (\psi, x) \iin c
    &\quad\corestep_a\quad c[v / x]
\end{align*}
A classic handle evaluation frame replaces the parametrized one.
\[
  E \;::=\; \cdots \mid \with v \handle E
\]
Core language reduction is written $\corestep$, distinguishing it from \cambria{} reduction $\leadsto$.
We further annotate core reductions with $\corestep_p$ for \emph{principal} steps and $\corestep_a$ for \emph{administrative} steps.
The only administrative step is the existential elimination, which has no runtime content.
In \cambria{}, this distinction collapses into the single parametrized handle reduction, as formalized in \Cref{prop:op-corresp}.

\subsection{Step-Indexed Logical Relation and Fundamental Theorem}

A logical relation builds a model of a calculus to capture an explicit property.
Each type is interpreted as a relation between pairs of closed terms, capturing when two terms are indistinguishable at that type.
We build a step-indexed logical relation~\cite{appel_indexed_2001, ahmed_semantics_2006}
over the core language to show that contextual equivalence is parametric over the type variables and parameters.
The formal result is the fundamental theorem (\Cref{thm:ft}), which states that every well-typed term is related to itself
uniformly across all admissible interpretations of its free variables.
Thus, the parameters cannot be inspected or used outside of the interfaces to the effects.

As parametrized handlers decompose into well-known constructs, the clauses of the step-indexed logical relation are standard.
While we only introduce the relation on the core syntax,
it extends to empty, product, sum, and list types, and all of the features in the current version of \cambria{}.
This includes row polymorphism~\cite{leijen_koka_2014, lindley_row-based_2012}, which is already utilized for effect inference in the current type system.
Instantiating row variables to closed signatures gives the present relation, and row generalization is handled exactly like type generalization.
Since parameters and row variables are disjoint with independent substitution, the two features compose.
For features not yet supported in the implementation, such as recursive types~\cite{ahmed_semantics_2006}
and universal quantification in operation arities~\cite{sekiyama2019handling},
we expect the parametricity result to generalize given the separation of concerns between parameters, type variables, and step-indexing.
We leave this to future work.

The step-indexing is necessary for general recursion, and integrates well with handlers.
Each operation observation strictly decreases the index, making the self-reference well-founded without resorting to induction over the size of the computation trees.
Since captured continuations are bound as ordinary variables and the relation is downward-closed in the step index,
each resumption is related, and multi-shot continuations need no special treatment.
The step index tracks whether two terms can be discerned within that number of computation steps, so the relation is refined as the step index grows.

\begin{definition}
  An \emph{admissible relation} is a step-indexed relation on closed value terms
  $R \subseteq \mathsf{Val}_0 \times \mathsf{Val}_0 \times \mathbb{N}$, an element of which we write $(v_1, v_2)_n$,
  that is downward-closed: if $(v_1, v_2)_n \in R$ and $n' \leq n$, then $(v_1, v_2)_{n'} \in R$.
  We write $\Rel$ for the set of all admissible relations.
\end{definition}

Given a type context $\Theta$ and parameter context $\Psi$,
a \emph{relational interpretation} $\rho : \Theta \cup \Psi \to \Rel$ assigns an admissible relation to each type variable and parameter.
The logical relation assigns to each typing judgement $\tjc \Theta \Psi X$ a mapping from relational interpretations to an admissible relation, written as $\interp{X}$.
When the type variable and parameter contexts are closed ($\Theta \cup \Psi = \emptyset$),
the interpretation is the unique map on the empty domain, so we drop the subscript.
The logical relation is defined inductively over the typing judgements and step index.
\begin{align*}
  \interp{\alpha} &= \rho(\alpha) \\
  \interp{\psi} &= \rho(\psi) \\
  \interp{\tunit} &= \{(\unit, \unit)_n \mid n \in \mathbb{N}\} \\
  \interp{\tfun{A}{\cC}} &= \{(v_1, v_2)_n \mid \forall i \leq n.\; \forall (w_1, w_2)_i \in \interp{A}.\; (v_1\;w_1, v_2\;w_2)_i \in \interp{\cC}\} \\
  \interp{\thandler{\cC}{\cD}} &= \{(v_1, v_2)_n \mid \forall i \leq n.\; \forall (c_1, c_2)_i \in \interp{\cC}. \\
  &\qquad\qquad\qquad (\with v_1 \handle c_1, \with v_2 \handle c_2)_i \in \interp{\cD}\} \\
  \interp{\exists\psi.A} &= \{(\pack(\psi, B_1, v_1), \pack(\psi, B_2, v_2))_n \mid \exists R \in \Rel.\; (v_1, v_2)_n \in \interp[{\rho[\psi \mapsto R]}]{A}\} \\
  \interp{\forall\vec\alpha.A} &= \{(v_1, v_2)_n \mid \forall \rho' \in \Rel^{\vec\alpha}.\; (v_1, v_2)_n \in \interp[{\rho[\rho']}]{A}\} \\
  \interp{\G} &= \{(\gamma_1, \gamma_2)_n \mid \forall(x:\forall\vec\alpha.A)\in\Gamma. \; (\gamma_1(x), \gamma_2(x))_n \in\interp{\forall\vec\alpha.A} \}
\end{align*}
Two closed computations are related at step index $n$ if they remain observationally equivalent after $n$ reduction steps.
That is, $(c_1,c_2)_n \in \interp{\comp{A}{\Sig}}$ if for all $i < n$:
\begin{itemize}
  \item If $c_1 \corestep^i \ret v_1$, then there is $v_2$ such that $c_2 \corestep^* \ret v_2$ and $(v_1, v_2)_{n-i} \in \interp{A}$
  \item If $c_1 \corestep^i \op(v_1; y.d_1)$ for $(\op : \arity{A_\op}{B_\op}) \in \Sig$,
    then there are $v_2$ and $d_2$ such that $c_2 \corestep^* \op(v_2; y.d_2)$ and:
  \begin{itemize}
    \item $(v_1, v_2)_{n-i} \in \interp{A_\op}$
    \item $\forall j < n{-}i.\; \forall (w_1, w_2)_j \in \interp{B_\op}.\; (d_1[w_1/y], d_2[w_2/y])_j \in \interp{\comp{A}{\Sig}}$
  \end{itemize}
\end{itemize}
The step index tracks reductions on the left, giving an asymmetric approximating relation.
The self-reference of $\interp{\comp{A}{\Sig}}$ is well-founded because the index strictly reduces.
As in the value case, computations are related at a universally quantified return type if they are related for every relational interpretation of the bound variables.
\[
  \interp{\comp{\forall\vec\alpha.A}{\Sig}}
    = \{(c_1, c_2)_n \mid \forall \rho' \in \Rel^{\vec\alpha}.\; (c_1, c_2)_n \in \interp[{\rho[\rho']}]{\comp{A}{\Sig}}\}
\]
The fundamental theorem states that this relation holds for every well-typed term.

\begin{theorem}[Fundamental Theorem]\label{thm:ft}
  If $\cjpc{\Theta}{\Psi}{\G}{t}{X}$, then for all $\rho : \Theta \cup \Psi \to \Rel$, $n\in\mathbb{N}$,
  and $(\gamma_1, \gamma_2)_n \in \interp{\G}$,
  \[
    (t[\gamma_1], t[\gamma_2])_n \in \interp{X}.
  \]
\end{theorem}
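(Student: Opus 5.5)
We argue by simultaneous induction on the typing derivation of $t$ (values and computations together), proving each rule as a compatibility lemma. First we need a few structural facts about the relation. (i) Every $\interp{X}$ is downward-closed in the step index, so it lies in $\Rel$. This goes by induction on $X$, with an inner induction on $n$ for computation types; for those it is immediate because the quantification is over $i<n$ and the residual indices only shrink. (ii) Weakening and substitution: if $\alpha,\psi\notin$ the free variables of $X$, then $\interp[{\rho[\alpha\mapsto R]}]{X}=\interp{X}$, and $\interp{X[B/\alpha]}=\interp[{\rho[\alpha\mapsto\interp{B}]}]{X}$, and likewise for $\psi$. (iii) Closure under anti-reduction: if $c_1\corestep^k c_1'$ and $c_2\corestep^* c_2'$ with $(c_1',c_2')_{n}\in\interp{\comp{A}{\Sig}}$, then $(c_1,c_2)_{n+k}\in\interp{\comp{A}{\Sig}}$. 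In particular, a related pair at index $n-1$ obtained after one left step gives a related pair at index $n$. This uses determinism of $\corestep$.

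\textbf{Value rules.} \textsc{Var} uses (ii) to instantiate the $\forall$ in $\interp{\Gamma}$ at the relations $\interp{A_\alpha}$. \textsc{Unit} is immediate. \textsc{Rec} is proved by an inner induction on $n$. To relate the two closures at index $n$, take $i\le n$ and related arguments. The application $\beta$-reduces in one step to the body, with the closures themselves and the arguments substituted. By the inner hypothesis the closures are related at $i-1$, so by downward closure the extended substitution lies in $\interp{\Gamma,f,x}$ at index $i-1$. The outer hypothesis then gives the body related at $i-1$, and (iii) lifts this to $i$. \textsc{Handler} is the standard handler compatibility lemma. Given related bodies $(c_1,c_2)_i$, we prove by induction on $i$ that the handled computations are related. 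Consider the left body's evaluation up to its first return or operation. The handle frame commutes with reduction, so the handled term also steps. If the body returns, we use the hypothesis for $c_r$. If it performs a handled operation, we use the hypothesis for $c_\op$ with $k$ bound to $\fun y\mapsto\with h\handle d$. These continuations are related at smaller indices by the inner induction together with the continuation clause of the computation relation. A forwarded operation is handled by re-exposing the operation, with the same inner induction on the continuation.

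\textbf{Computation rules.} \textsc{Return} is direct. \textsc{App} unfolds the function clause at $i=n$. \textsc{Gen} follows from the induction hypothesis at every extension $\rho[\rho']$, since $\vec\alpha$ does not occur in $\Gamma$ and (ii) lets us use $\gamma$ unchanged. \textsc{Op} makes the left side immediately an operation with $i=0$. Its argument comes from the value hypothesis, and the continuation clause comes from the hypothesis for $c$ at index $j$, with downward closure of $\gamma$. \textsc{Do} is a bind lemma, again by induction on $n$. We case on whether $c_1$ returns, at which point the $\forall\vec\alpha.A$ relation supplies $\interp{\Gamma,x}$, or performs an operation, which bubbles out through the commuting-conversion step and is closed by the inner induction. \textsc{Handle} is the hypothesis for the handler at $i=n$, applied to the body. \textsc{Pack} chooses $R=\interp{A}$ and applies (ii). \textsc{Unpack} destructs the related packs to obtain $R$, runs the body with $\rho[\psi\mapsto R]$, and uses $\psi\notin\fpv(\cC,\Gamma)$ with (ii) to return to $\rho$. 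The one administrative step is absorbed by (iii); alternatively, the relation can count only principal steps.

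\textbf{Main obstacle.} The hard part is the \textsc{Handler} and \textsc{Do} compatibility lemmas. Index bookkeeping must be exact: when the right side takes arbitrarily many steps and the left takes $i$, the residual indices $n-i$ and $j<n-i$ must match the nested continuation clauses. The deep-handler continuation must be related under an inner induction whose measure strictly decreases. I would first prove a general "frame lemma" stating that relatedness is preserved by evaluation contexts sending related returns and related operations to related results. Bind and handle then become instances of this lemma. If the index arithmetic fails there, the fallback is to strengthen the inner induction to a well-founded induction on the pair (index, syntactic depth).
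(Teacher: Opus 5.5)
Your proposal follows the paper's proof essentially step for step. It uses the same auxiliary facts: downward closure, substitution and weakening, and anti-reduction from determinism. It also uses inner induction on the step index for \textsc{Rec} and \textsc{Handler}, a bind lemma for \textsc{Do}, the witness $R=\interp{A}$ for \textsc{Pack}, and anti-reduction to absorb the administrative step in \textsc{Unpack}.

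Your \textsc{Handler} argument misses one case, and the paper's sketch passes over it as well. You split on whether the left body returns, performs a handled operation, or performs a forwarded one. Read literally, however, the operation clause of $\interp{\comp{A}{\Sig}}$ only fires for operations in $\Sig$. So a left body that reaches some $\op\notin\Sig$ is related vacuously, and this tells you nothing about the right body.

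Here is a concrete case. The handler $h=\hhandler\;\{\ret x\mapsto\ret x\}$ is well typed at $\thandler{\comp{\tunit}{\emptyset}}{\comp{\tunit}{\{\op:\tunit\to\tunit\}}}$. Take $c_1=\op(\unit;\,y.\ret y)$ and $c_2=(\rec f\;x\mapsto f\;x)\;\unit$. They lie in $\interp[{}]{\comp{\tunit}{\emptyset}}$ at every index, because $c_1$ is terminal and neither clause applies. Yet $\with h\handle c_1$ exposes $\op$, which is in the output signature, after one step, while $\with h\handle c_2$ diverges. So $(h,h)$ already fails the handler clause at index $2$. A clause handling some $\op\notin\Sig$ breaks in the same way.

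The repair is to read the operation clause as \emph{requiring} that every operation the left side reaches within the index belongs to $\Sig$. Then the body can only reach operations of $\Sig$, and each of these is either handled or forwarded into $\Sig\setminus\Sig''\subseteq\Sig'$. Your three cases then become exhaustive. The other typing cases, monotonicity, anti-reduction and the bind lemma are unaffected by this strengthening.
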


\begin{proof}
  The proof rests on three lemmas. For each, let $\rho : \Theta \cup \Psi \to \Rel$ be a relational interpretation for type contexts $\Theta; \Psi$.
  \begin{itemize}
    \item \textbf{Monotonicity:} For all $\tjc \Theta \Psi X$, if $(x_1, x_2)_n \in \interp X$ and $n' \leq n$, then $(x_1, x_2)_{n'} \in \interp X$.
      This is proved by induction over the definition of the logical relation $\interp{{-}}$.
    \item \textbf{Anti-reduction:} For all $\tjc \Theta \Psi \cC$, if $c_1 \corestep^i c_1'$, $c_2 \corestep^* c_2'$,
      and $(c_1', c_2')_n \in \interp \cC$, then $(c_1, c_2)_{n+i} \in \interp \cC$. This holds because reduction is deterministic.
    \item \textbf{Substitution:} For $\tjc {\Theta, \alpha} \Psi X$ and $\tjc \Theta \Psi A$, $\interp {X[A / \alpha]} = \interp[{\rho[\alpha \mapsto \interp{A}]}]{X}$.
      Equivalently for substitution of parameter variables. This is also proved by induction over the definition of the logical relation.
  \end{itemize}
  Then the fundamental theorem is proved by induction over the value and computation typing judgements.
  The insight that parametrized handlers decompose into classic handlers and existential types makes all of these cases standard.
  Nevertheless, we highlight the key ones.

  \textbf{Rec} and \textbf{Handler:} By inner induction on $n$ with the same pattern.
  The inner IH gives the recursive occurrences related at $n{-}1$, the structural IH relates the bodies after a step, and anti-reduction lifts this back to $n$.
  For the handler, the return case is straightforward.
  For an operation $\op:\arity{A_\op}{B_\op}\in\Sig$, the handle term reduces either to its operation clause applied to the deep continuation
  ($\fun y\mapsto\with h \handle c$) if handled, or to $\op(v;\;y.\with h \handle c)$ if forwarded.
  In both cases the inner IH relates the continuations one step index lower.

  \textbf{Do:} Via the bind lemma (\Cref{lem:bind}).

  \textbf{Gen:} For any $\rho' \in \Rel^{\vec\alpha}$, using $\vec\alpha \notin \fv(\Gamma, \Sig)$, we can weaken to
  $(\gamma_1,\gamma_2)_n \in \interp[{\rho[\rho']}]{\Gamma}$. Then we use the IH to conclude the computations are in $\interp[{\rho[\rho']}]{\comp{A}{\Sig}}$.

  \textbf{Pack:} The IH gives $(v[\gamma_1], v[\gamma_2])_n \in \interp{B[A/\psi]}$.
  Taking $R = \interp{A}$ as the existential witness, admissible by monotonicity, the packed values are related at $\interp{\exists\psi.B}$ by the substitution lemma.

  \textbf{Unpack:} The IH on $v$ yields packed witnesses $\pack(\psi, B_k, w_k)$ (for $k\in\{1,2\}$), together with some
  $R \in \Rel$ relating $w_1, w_2$ at $\interp[{\rho[\psi\mapsto R]}]{A}$.
  The IH on $c$ at the extended contexts puts the bodies in $\interp[{\rho[\psi\mapsto R]}]{\cC} = \interp{\cC}$ (using $\psi \notin \fpv(\cC)$),
  and anti-reduction lifts this to the unpack terms.
\end{proof}

\begin{lemma}[Bind Lemma] \label{lem:bind}
  If $(c_1, c_2)_n \in \interp{\comp{\forall \vec{\alpha}.A}{\Sig}}$ and for all $i \leq n$ and $(v_1, v_2)_i \in \interp{\forall \vec{\alpha}.A}$
  we have $(d_1[v_1/x], d_2[v_2/x])_i \in \interp{\comp{B}{\Sig}}$, then
  \[
    (\ddo x \leftarrow c_1 \iin d_1, \ddo x \leftarrow c_2 \iin d_2)_n \in \interp{\comp{B}{\Sig}}.
  \]
\end{lemma}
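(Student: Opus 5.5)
We argue by strong induction on $n$, and within a fixed $n$ we unfold the computation relation $\interp{\comp{B}{\Sig}}$. First we reduce the universally quantified hypothesis to a monomorphic one. Fix an arbitrary $\rho' \in \Rel^{\vec\alpha}$; then $(c_1,c_2)_n \in \interp[{\rho[\rho']}]{\comp{A}{\Sig}}$. Because $B$ and $\Sig$ do not mention $\vec\alpha$ (implicit well-formedness, as in the \textbf{Gen} case), $\interp[{\rho[\rho']}]{\comp{B}{\Sig}} = \interp{\comp{B}{\Sig}}$. The one subtlety is that a returned value must land in $\interp{\forall\vec\alpha.A}$, i.e.\ be related for \emph{every} $\rho'$. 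We therefore do the case analysis directly against the $\forall$-quantified hypothesis. Since reduction is deterministic, the $i$ and the reducts of $c_1$ do not depend on $\rho'$, and the existential witnesses $v_2, d_2$ on the right are determined by $c_2 \corestep^* \ret v_2$ up to the deterministic normal form. Relatedness of the resulting values therefore holds for each $\rho'$, which is exactly membership in $\interp{\forall\vec\alpha.A}$.

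Next we fix $k < n$ and suppose $\ddo x \leftarrow c_1 \iin d_1 \corestep^k t$, where $t$ is a return or an operation in $\Sig$. By determinism and the shape of evaluation contexts, the reduction sequence splits into one of two cases. Either $c_1 \corestep^{i} \ret v_1$ with $i \le k$, followed by one bind step to $d_1[v_1/x]$ and $k-i-1$ further steps. Or $c_1 \corestep^i \op(v_1;y.e_1)$, after which the bind commutes, giving $\op(v_1; y.\ddo x \leftarrow e_1 \iin d_1)$ at $k = i+1$. Strictly, the definition asks about $c_1$ reducing to a terminal in exactly $k$ steps, so the clean formulation is to case on the first terminal form reached by $c_1$ in at most $k$ steps. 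A $c_1$ that never terminates within $k$ steps cannot make the bind terminal.

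In the return case, the hypothesis on $c_1,c_2$ (with $i < n$) yields $c_2 \corestep^* \ret v_2$ and $(v_1,v_2)_{n-i} \in \interp{\forall\vec\alpha.A}$. The hypothesis on $d$ at index $n-i \le n$ then gives $(d_1[v_1/x], d_2[v_2/x])_{n-i} \in \interp{\comp{B}{\Sig}}$. The left side reaches $t$ in $k-i-1 < n-i$ steps, so we obtain the required right reduct and relatedness at $(n-i)-(k-i-1) = n-k+1$. Monotonicity brings this down to $n-k$. The right side is reached by $\ddo x \leftarrow c_2 \iin d_2 \corestep^* d_2[v_2/x] \corestep^* \cdots$ through the evaluation context.

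In the operation case, the hypothesis gives $c_2 \corestep^* \op(v_2;y.e_2)$ with $(v_1,v_2)_{n-i}$ related at $A_\op$, and continuations related at all $j < n-i$. The right side reduces to $\op(v_2;y.\ddo x\leftarrow e_2 \iin d_2)$. The argument values are related at index $n-i$ and hence, by monotonicity, at $n-k$. For the continuations we need, for $j < n-k$ and related $w$'s, that $(\ddo x \leftarrow e_1[w_1/y] \iin d_1, \ldots)_j \in \interp{\comp{B}{\Sig}}$. This follows from the outer induction hypothesis at $j < n$: the inner computations are related at $j$ (since $j < n-i$), and the $d$-hypothesis holds at all indices $\le j \le n$. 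This appeal to the induction on $n$ is the main obstacle and the reason strong induction is needed, since the relation is self-referential through continuations. The remaining steps are bookkeeping of indices, with monotonicity absorbing the slack from the administrative bind step.
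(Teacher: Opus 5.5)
Your proof is correct and follows the paper's route. It uses strong induction on $n$ and a case split on whether $c_1$ first reaches a return or an operation, applying the $d$-premise in the return case and the induction hypothesis at the strictly smaller index $j$ for the operation continuations. The differences are only presentational: you unfold the computation relation with explicit index arithmetic where the paper invokes anti-reduction and monotonicity, and you spell out why the right-hand witnesses are uniform in $\rho'$ (via determinism and the fact that $\vec\alpha$ does not occur in $\Sig$), which the paper leaves implicit.
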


\begin{proof}
  We prove the lemma via strong induction on $n$.
  If $c_1 \corestep^i \ret v_1$ for some $i < n$, then the bind reduces to $d_1[v_1/x]$ after $i{+}1$ steps.
  Then $c_2 \corestep^* \ret v_2$ with $(v_1, v_2)_{n-i} \in \interp{\forall\vec\alpha.A}$,
  the lemma premise gives $(d_1[v_1/x], d_2[v_2/x])_{n-i-1} \in \interp{\comp{B}{\Sig}}$, and anti-reduction and monotonicity lift this back to $n$.
  If instead $c_1 \corestep^i \op(v_1;\; y.c_1')$ for some $i < n$, the bind reduces to $\op(v_1;\; y. \ddo x \leftarrow c_1' \iin d_1)$ after $i{+}1$ steps.
  Then $c_2 \corestep^* \op(v_2;\; y.c_2')$ with $(v_1, v_2)_{n-i} \in \interp{A_\op}$ and continuations related for $j < n{-}i$.
  The inner IH applies to $(\ddo x \leftarrow c_1'[w_1/y] \iin d_1, \ddo x \leftarrow c_2'[w_2/y] \iin d_2)$ at this strictly lower index,
  and anti-reduction again lifts this back to $n$.
\end{proof}

\subsection{Elaboration from \cambria{} and Operational Correspondence} \label{sec:elaboration}

\newcommand{\elab}[1]{{(#1)^\dagger}}

The fundamental theorem applies to the core language.
To transfer parametricity to \cambria{}, we need to connect the two calculi via an elaboration.
We define such an elaboration and show that it is faithful.
Hence, the fundamental theorem is reflected by \cambria{}, and the abstract interfaces of the parametrized handlers are formally guaranteed.

\begin{definition}
  The \emph{elaboration translation} $\elab{{-}}$ from \cambria{}'s syntax to the core syntax is given by
  structural identity on all terms, apart from the parametrized handle clause, where it is defined inductively over the size of the parameter substitution.
  \begin{align*}
    \elab{\with \emptyset, v \handle c}              &= \with v^\dagger \handle c^\dagger \\[-2pt]
    \elab{\with \sigma[\psi \mapsto A], v \handle c} &= \unpack \pack(\psi, A, v^\dagger) \as (\psi, x) \iin \elab{\with \sigma, x \handle c}
  \end{align*}
\end{definition}

This elaboration preserves typing judgements, and is thus well-defined.

\begin{proposition}[Type Preservation] \label{prop:preservation}
  If $\cjp \Theta \Psi \Gamma c \cC$, then $\cjpc \Theta \Psi \Gamma {c^\dagger} \cC$.
\end{proposition}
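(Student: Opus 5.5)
We prove the statement by simultaneous structural induction on the value and computation typing derivations, strengthening it to cover values, handlers and computation types of the form $\comp{\forall\vec\alpha.A}{\Sig}$. Since $\elab{{-}}$ is the identity on every construct except the parametrized handle, each rule other than \textsc{ParHandle} (\textsc{Var}, \textsc{Unit}, \textsc{Rec}, \textsc{Handler}, \textsc{Return}, \textsc{Gen}, \textsc{App}, \textsc{Do}, \textsc{Op}) is sent to the identically named core rule. The premises of the core rule come directly from the induction hypotheses, because the type-level well-formedness judgements $\tj\Theta\Psi X$ and $\tjc\Theta\Psi X$ coincide on types without $\exists$. We also note that elaboration commutes with the implicit side conditions: the contexts $\Theta,\Psi,\Gamma$ are untouched, so the freshness conditions needed for \textsc{Gen} carry over unchanged.

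The only substantive case is \textsc{ParHandle}. We handle it by an inner induction on the size of $\sigma$, proving the following auxiliary claim. Suppose $\vjc\Theta\Psi\Gamma{v^\dagger}{\thandler{\cC[\sigma]}{\cD}}$, $\cjpc\Theta{\Psi,\dom(\sigma)}\Gamma{c^\dagger}\cC$, and every $A\in\cod(\sigma)$ satisfies $\tj\Theta\Psi A$. Then $\cjpc\Theta\Psi\Gamma{\elab{\with\sigma,v\handle c}}\cD$.

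For $\sigma=\emptyset$, the claim is exactly the core \textsc{Handle} rule. For $\sigma[\psi\mapsto A]$, we build $\pack(\psi,A,v^\dagger)$ with the \textsc{Pack} rule at type $\exists\psi.(\thandler{\cC[\sigma]}{\cD})$. This requires $\thandler{\cC[\sigma[\psi\mapsto A]]}{\cD} = (\thandler{\cC[\sigma]}{\cD})[A/\psi]$, which holds because $\psi\notin\fpv(\cD)$, as $\cD$ is well-formed in $\Theta;\Psi$. It also requires that the substitution composition is sequential. We read $\sigma[\psi\mapsto A]$ as simultaneous, and since codomains live in $\Theta;\Psi$ and so contain no variables of $\dom(\sigma)$, sequential and simultaneous application agree. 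Next we apply \textsc{Unpack}, binding $x:\thandler{\cC[\sigma]}{\cD}$ in parameter context $\Psi,\psi$. The body $\elab{\with\sigma,x\handle c}$ is typed by the inner hypothesis at context $\Psi,\psi$, given three facts. First, $x$ has the required handler type. Second, $c^\dagger$ is typed in $(\Psi,\psi),\dom(\sigma)$, which is a reordering of $\Psi,\dom(\sigma[\psi\mapsto A])$. Third, the codomain types of $\sigma$ are well-formed in $\Psi,\psi$ by weakening. The side condition $\psi\notin\fpv(\cD,\Gamma)$ required by \textsc{Unpack} again follows from well-formedness in $\Theta;\Psi$.

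We also need to check $\elab{v}$ itself. Variable freshness matters here: $x$ must not occur in $c$, which we can assume by $\alpha$-renaming of the elaboration's bound variable. We therefore need a core weakening lemma for term variables, and weakening for parameter contexts together with exchange. All of these are routine structural inductions, analogous to the substitution lemmas used in \Cref{thm:pp}.

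The main obstacle is the bookkeeping of the substitution and its interaction with the order of unpacking. In particular, we must confirm that peeling off $\psi$ last in the definition, so that it is unpacked outermost, matches the typing of $v$ at $\cC[\sigma[\psi\mapsto A]]$. This reduces to the lemma $\cC[\sigma[\psi\mapsto A]] = (\cC[\sigma])[A/\psi]$ under the well-formedness assumption $\fpv(\cod)\subseteq\Psi$, and we will prove that lemma first by induction on types.
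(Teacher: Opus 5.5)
Your proposal is correct and follows essentially the same route as the paper. Both arguments use an inner induction on $\sigma$ in the \textsc{ParHandle} case: build the $\pack$ using $\psi \notin \fpv(\cD)$, weaken the body into $\Psi,\psi$ with $x$ bound at the handler type, and close with \textsc{Unpack} and the inner hypothesis. Your auxiliary claim, stated directly over the core judgements for $v^\dagger$ and $c^\dagger$, is a slightly cleaner packaging of the paper's step of typing $\with \sigma', x \handle c$ and then invoking the IH, and it makes explicit the substitution-composition, exchange and freshness bookkeeping that the paper leaves implicit.
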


\begin{proof}
  The only non-trivial case is the \textsc{ParHandle} elaboration. This is proved by inner induction on the parameter substitution.
  When $\sigma=\emptyset$, the rule is exactly the classic handler rule.
  When $\sigma=\sigma'[\psi \mapsto A]$, unwinding the typing derivation gives:
  \[
    \inferrule{
      \vj{\Theta}{\Psi}{\G}{v}{\thandler{\cC[\sigma'[\psi\mapsto A]]}{\cD}}
      \\
      \cjp{\Theta}{\Psi,\dom(\sigma'), \psi}{\G}{c}{\cC}
      \\
      \tj \Theta \Psi A \quad \cdots
    }{
      \cjp{\Theta}{\Psi}{\G}{\with \sigma'[\psi \mapsto A], v \handle c}{\cD}
    }
  \]
  After invoking the structural IH, as $\psi \notin \fpv(\cD)$, we can derive
  \begin{equation}
    \inferrule{
      \vjc{\Theta}{\Psi}{\G}{v}{(\thandler{\cC[\sigma']}{\cD})[A / \psi]}
    }{
      \vjc{\Theta}{\Psi}{\G}{\pack(\psi, A, v)}{\exists\psi.(\thandler{\cC[\sigma']}{\cD})}
    }\text{,} \label{eq:pack}
  \end{equation}
  and weakening $\cjpc{\Theta}{\Psi,\dom(\sigma'), \psi}{\G}{c}{\cC}$ gives,
  \begin{equation}
    \cjpc \Theta {\Psi, \psi} {\Gamma, x : \thandler{\cC[\sigma']}{\cD}} {\with \sigma', x \handle c} \cD . \label{eq:freepsi}
  \end{equation}
  Then, combining \eqref{eq:pack} and \eqref{eq:freepsi} we can derive
  \[
    \cjpc{\Theta}{\Psi}{\G}{\unpack \pack(\psi, A, v) \as (\psi, x) \iin \elab{\with \sigma', x \handle c}}{\cD}
  \]
  where we invoked the IH on $\elab{\with \sigma', x \handle c}$.
\end{proof}

This elaboration is sound up to administrative reduction.
A step in \cambria{} forces a step by its elaboration, modulo the existential unpackings which carry no runtime content.
That is, each step is matched by some administrative steps from each elaboration, with a final principal step to close the gap.

\begin{proposition}[Operational Correspondence] \label{prop:op-corresp}
  Define a \emph{simulation relation} $c \lhd c'$ when the elaborations $c^\dagger$ and ${c'}^\dagger$ share a common reduct $d$,
  with $c^\dagger \corestep^*_a \; \corestep_p d$ and ${c'}^\dagger \corestep_a^* d$.
  Then $c \leadsto c'$ implies $c \lhd c'$.
\end{proposition}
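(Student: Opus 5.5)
We prove the statement by induction on the derivation of $c \leadsto c'$, that is, on the evaluation context $E$ with $c = E[r]$ and $c' = E[r']$ for a redex $r \leadsto r'$. The first step is a pair of bookkeeping lemmas about the elaboration.

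\begin{itemize}
  \item \textbf{Substitution.} Elaboration commutes with value substitution, $\elab{c[v/x]} = c^\dagger[v^\dagger/x]$. This is a structural induction, using that $\dagger$ is the identity except on handle terms. It also commutes with parameter substitution, since $\sigma$ is erased into $\pack$ annotations that play no role in reduction.
  \item \textbf{Contexts.} Elaboration sends evaluation contexts to evaluation contexts. For $E = \ddo x \leftarrow E' \iin d$ this is immediate. For $E = \with \sigma, v \handle E'$ with $\sigma = \{\psi_1 \mapsto A_1, \ldots, \psi_k \mapsto A_k\}$, the elaboration is a chain of $k$ unpack-of-pack wrappers around $\with x_k \handle \elab{E'}$. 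That term is an evaluation context only after the $k$ administrative steps $\corestep_a$ that substitute $v^\dagger$ for the handler variable.
\end{itemize}

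Since core reduction is closed under evaluation contexts, the second lemma turns "$\elab{E}[r^\dagger]$" into "$\corestep_a^k$ followed by a context-plugged term".

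\textbf{Base cases.} For $\beta$-reduction, the bind-return step, and the bind-op commutation, elaboration is structural. By the substitution lemma, $c^\dagger \corestep_p \elab{c'}$ in exactly one principal step, so we take $d = \elab{c'}$ with zero administrative steps on either side.

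For the three handler redexes, let $c = \with \sigma, h \handle r$.
\begin{itemize}
  \item $c^\dagger$ first performs $|\sigma|$ administrative unpack steps, reaching $\with h^\dagger \handle r^\dagger$.
  \item It then performs one principal core handler step. In the return case this reaches $c_r^\dagger[v^\dagger/x] = \elab{c_r[v/x]}$.
  \item In the op cases, the core continuation is $\fun y \mapsto \with h^\dagger \handle c_0^\dagger$. This is exactly $\elab{\fun y \mapsto \with \emptyset, h \handle c_0}$, because the \cambria{} reduct uses the empty substitution.
\end{itemize}
So $d = \elab{c'}$ again, with ${c'}^\dagger \corestep_a^0 d$. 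This is precisely why the reduction rules discarding $\sigma$ matches the elaboration.

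\textbf{Inductive cases.} For $c = E[r]$ with $E$ nontrivial, the IH gives $r^\dagger \corestep_a^* \corestep_p d_0$ and ${r'}^\dagger \corestep_a^* d_0$. We plug both reductions into the elaborated context, and for a handle frame we prepend the frame's own administrative unpacks. The common reduct $d$ is the elaborated frame with its unpacks performed and $d_0$ plugged in. Concretely:
\begin{itemize}
  \item $c^\dagger$ reaches $d$ by the frame unpacks, then the IH's administrative steps inside, then one principal step.
  \item ${c'}^\dagger$ reaches $d$ by the same frame unpacks and the IH's administrative steps only.
\end{itemize}
The shape $\corestep_a^* \corestep_p$ is preserved because administrative steps inside the frame can be commuted before or after the inner reduction. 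Both sides must perform the frame's unpacks, which is why the simulation is stated up to a common reduct rather than $c^\dagger \corestep^+ {c'}^\dagger$.

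\textbf{Main obstacle.} The delicate point is the handle-frame case. There, one has to check that the frame's administrative unpacks on $c^\dagger$ and ${c'}^\dagger$ produce identical handler values $v^\dagger$. They do, since $\sigma$ and $v$ are untouched by the inner step. One also has to ensure that the ordering "administrative before principal" is achievable. This holds because the unpacks are at the outermost positions and each is a redex in an evaluation context, by the context lemma. We would make the context lemma precise first: for every \cambria{} context $E$ there is a core context $E^\ddagger$ with $\elab{E}[t] \corestep_a^{m} E^\ddagger[t]$ for all $t$, where $m$ is the total size of the substitutions along $E$'s handle frames and is independent of $t$. Every case then follows uniformly from this lemma.
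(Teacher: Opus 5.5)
Your proposal is correct and follows essentially the same route as the paper. The unpack--pack chain of $(\with \sigma, h \handle r)^\dagger$ reduces administratively to $\with h^\dagger \handle r^\dagger$. The base redexes take $d = {c'}^\dagger$, because the \cambria{} reduct already uses the empty substitution and elaboration commutes with substitution. Closure under evaluation contexts works because each handle frame contributes the same administrative unpacks on both sides.

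You package the last step as a single context lemma, whereas the paper shows by induction on the context that $c \lhd c'$ implies $E[c] \lhd E[c']$; this is only a presentational difference. No commuting of steps is actually needed, since the frame's unpacks sit in evaluation position and simply fire first in the deterministic core.
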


\begin{proof}
  We first show by induction that $(\with \sigma, h \handle b)^\dagger \corestep^*_a (\with h^\dagger \handle b^\dagger)$.
  For $\sigma = \emptyset$, this is immediate after zero reductions.
  For $\sigma = \sigma'[\psi \mapsto A]$,
  \begin{align*}
    (\with \sigma'[\psi \mapsto A], h \handle b)^\dagger
    &= \unpack \pack(\psi, A, h^\dagger) \as (\psi, x) \iin (\with \sigma', x \handle b)^\dagger \\
    &\corestep_a (\with \sigma', h \handle b)^\dagger \\
    &\corestep_a^* \with h^\dagger \handle b^\dagger \tag{IH}
  \end{align*}
  For the reduction rules, the only non-trivial cases are handler reductions.
  Elaboration is structural identity on all other terms, so principal steps correspond exactly to \cambria{} steps, and $d = c'^\dagger$ witnesses $c \lhd c'$.
  For $c = \with \sigma, h \handle c_t$ where $c_t$ is a terminal form,
  \[
    c^\dagger \corestep^*_a \with h^\dagger \handle c_t^\dagger \corestep_p c'^\dagger
  \]
  because parametrized handling reinstates $\sigma = \emptyset$ and elaboration commutes with substitution.
  So $d = c'^\dagger$ witnesses $c \lhd c'$.
  
  Finally, we show that $c \lhd c'$ implies $E[c] \lhd E[c']$ by induction on the evaluation context.
  The base case is immediate, and the $\mathsf{do}$ context follows directly from the IH because elaboration is again structural identity.
  For $E = \with \sigma, h \handle E'$, assume $E'[c] \lhd E'[c']$ witnessed by $d$.
  Then,
  \begin{align*}
    E[c]^\dagger \corestep^*_a \with h^\dagger \handle E'[c]^\dagger \corestep^*_a \; \corestep_p \with h^\dagger \handle d, \\
    E[c']^\dagger \corestep^*_a \with h^\dagger \handle E'[c']^\dagger \corestep^*_a \with h^\dagger \handle d.
  \end{align*}
  So $(\with h^\dagger \handle d)$ witnesses $E[c] \lhd E[c']$.
\end{proof}

Since the elaboration produces only classic handlers and existential types,
and reduces to a classic handler when $\sigma=\emptyset$, a language offering these two features needs no new core constructs to extend support to parametrized handlers.

\subsection{Contextual Equivalence and Adequacy of the Logical Relation}

The operational correspondence allows us to state the logical relation over \cambria{} terms via their elaborations.
To reason with it, we connect the logical relation to the standard notion of program equivalence in \cambria{}, contextual equivalence.
We define two programs to be contextually equivalent when no context can distinguish their behaviour.
In the presence of general recursion, it is enough to observe termination, since any behavioural difference can be converted to divergence on one side.
Hence, two programs are contextually equivalent when, in any context, one terminates if and only if the other does.

\begin{definition}
  In \cambria{}, a \emph{computation context} is a computation with a hole in it.
  \[
  \begin{array}{rcl}
    \ctx & ::= & \square \mid \op(v; x.\ctx) \mid \ddo x \leftarrow \ctx \iin c
                                       \mid \ddo x \leftarrow c \iin \ctx
                                       \mid \with \sigma, v \handle \ctx
  \end{array}
  \]
  Given a well-typed computation $\cj \Theta \Psi \Gamma c A \Sig$,
  a \emph{closing context} for $c$ is a computation context $\ctx$ such that $\cj \eenv \eenv \eenv {\ctx[c]} \tunit \emptyset$.
  Given two well-typed computations $\cj \Theta \Psi \Gamma {c_1, c_2} A \Sig$, they are \emph{contextually equivalent}, written $c_1 \approx c_2$,
  if for all closing contexts $\ctx$,
  \[
    \ctx[c_1] \leadsto^* \ret \unit \iff \ctx[c_2] \leadsto^* \ret \unit.
  \]
\end{definition}

Adequacy states that the logical relation captures this notion of contextual equivalence in \cambria{}.
In particular, if any two elaborated, well-typed \cambria{} terms are in the logical relation for any relational interpretation, step index, and environment substitutions,
then they are contextually equivalent.
This is achieved via a plugging lemma which propagates the logical relation through any computation context,
bridging the fundamental theorem to contextual equivalence.

\begin{lemma}[Plugging Lemma] \label{lem:plug}
  If for all $\rho: \Theta \cup \Psi \to \Rel$, $n\in\mathbb{N}$, and $(\gamma_1, \gamma_2)_n \in \interp{\Gamma}$, we have
  \[
    (c_1^\dagger[\gamma_1], c_2^\dagger[\gamma_2])_n \in \interp{\cC},
  \]
  then for any computation context $\ctx$, where $\cjp {\Theta'} {\Psi'} {\Gamma'} {\ctx[c_1], \ctx[c_2]} {\cC'}$, we have for all 
  $\rho': \Theta' \cup \Psi' \to \Rel$, $n\in\mathbb{N}$, and $(\gamma_1', \gamma_2')_n \in \interp[\rho']{\Gamma'}$,
  \[
    (\ctx[c_1]^\dagger[\gamma_1'], \ctx[c_2]^\dagger[\gamma_2'])_n \in \interp[\rho']{\cC'}.
  \]
\end{lemma}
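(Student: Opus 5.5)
The proof goes by structural induction on the computation context $\ctx$, generalising over the typing contexts. The induction hypothesis is stated for an arbitrary typing of the inner plugged computation. For each context former we assume the statement for the subcontext and rebuild the outer term, using the fundamental theorem (\Cref{thm:ft}) on the ambient syntax. Throughout we use that elaboration is structural identity except on handle terms and commutes with closing substitutions, so $\ctx[c_k]^\dagger[\gamma_k'] = \ctx^\dagger[c_k^\dagger][\gamma_k']$ with the obvious extension of $\dagger$ to contexts. We also use type preservation (\Cref{prop:preservation}) so that every elaborated subterm is well typed in the core and \Cref{thm:ft} applies to it.

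\textbf{Base case.} For $\ctx = \square$ the claim is the hypothesis, after matching the typing of $c_k$ against $\Theta';\Psi';\Gamma'$.

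\textbf{Sequencing contexts.} For $\ddo x \leftarrow \ctx' \iin c$, the IH relates the plugged subterm at $\comp{\forall\vec\alpha.A}{\Sig}$. \Cref{thm:ft} applied to $c$ under the extended environment $\gamma_k'[x\mapsto v_k]$ supplies the second premise of the bind lemma (\Cref{lem:bind}), using monotonicity to move to the index $i \le n$. The bind lemma then closes the case. For $\ddo x \leftarrow c \iin \ctx'$ the roles swap: \Cref{thm:ft} relates $c[\gamma_k']$. For each related pair of values $(v_1,v_2)_i$, the extended environments are related in $\interp[\rho']{\Gamma', x:\forall\vec\alpha.A}$ at index $i$ by monotonicity, so the IH relates the bodies. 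The bind lemma again finishes the case. The generalisation over $\Gamma'$ in the induction is exactly what makes this step go through. For $\op(v;x.\ctx')$ we unfold the computation relation directly. The term is already an operation after $0$ steps. \Cref{thm:ft} relates $v[\gamma_k']$ at $\interp{A_\op}$, and for $j<n$ and related $(w_1,w_2)_j$ the IH at the environment extended by $x\mapsto w_k$ relates the continuations.

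\textbf{Handle case.} For $\with \sigma, v \handle \ctx'$ we induct on $\sigma$, mirroring the elaboration. When $\sigma=\emptyset$, \Cref{thm:ft} puts $v[\gamma_k']^\dagger$ in $\interp[\rho']{\thandler{\cC}{\cD}}$. The IH relates the bodies at $\cC$, and the handler clause of the relation concludes directly. When $\sigma=\sigma'[\psi\mapsto A]$, the elaboration is an $\unpack$ of a $\pack$ on both sides. This reduces in one administrative step to the elaboration with $\sigma'$ and $x:=v^\dagger$, and anti-reduction lets us absorb that step. The body lives in the extended parameter context containing $\psi$, so we apply the IH with $\rho'[\psi\mapsto \interp[\rho']{A}]$. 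The substitution lemma identifies $\interp[\rho'[\psi\mapsto\interp{A}]]{\cC[\sigma']}$ with the relation at the handler's substituted input type. Since $\psi\notin\fpv(\cD,\Gamma')$, the environment and output relations are unchanged by the extension.

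I expect this handle case to be the main obstacle. The rest is routine bookkeeping with monotonicity, but here the interpretation must be threaded through the nested $\unpack$ chain. Aligning the relation at which the IH delivers the body with the handler's declared input type $\cC[\sigma]$ requires a careful appeal to the substitution lemma together with weakening in unused parameters.
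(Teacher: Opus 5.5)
Your proposal is correct and follows essentially the same route as the paper. Both use structural induction on $\ctx$, with \Cref{thm:ft} (via \Cref{prop:preservation}) relating the non-hole subterms and the IH relating the subcontext, plus an inner induction on $\sigma$ in the handle case where the administrative $\unpack$ step is absorbed by anti-reduction. Your extra detail is precisely what the paper compresses into ``use the inner IH to relate the two terms after environment substitutions'': the bind lemma for the two $\mathsf{do}$ contexts, the direct unfolding of the computation relation for $\op(v;x.\ctx')$, and extending $\rho'$ by $\psi\mapsto\interp[\rho']{A}$ with the substitution lemma so that the handler's input type $\cC[\sigma]$ lines up with the body's relation.
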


\begin{proof}
  The proof is by structural induction on the computation context.
  In each case, the fundamental theorem (\Cref{thm:ft})
  relates non-hole subterms to themselves, and the IH relates the subcontext.
  The only non-standard case is $(\with \sigma, v \handle \ctx')$, which uses inner induction on $\sigma$.
  The base case proceeds with the same pattern (relating non-hole terms via \Cref{thm:ft} and subcontexts via the IH).
  The inductive case elaborates as
  \[
    (\with \sigma'[\psi \mapsto A], v \handle \ctx')^\dagger
    = \unpack \pack(\psi, A, v^\dagger) \as (\psi, x) \iin (\with \sigma', x \handle \ctx')^\dagger,
  \]
  which reduces back to $(\with \sigma', v \handle \ctx')^\dagger$, and we use the inner IH to relate the two terms after environment substitutions.
\end{proof}

The logical relation is asymmetric because the step index tracks reductions on one side, capturing step-bounded approximate behaviour.
The approximation becomes a full bidirectional equivalence for a relational interpretation $\rho$ when for all $n\in \mathbb{N}$,
$(x_1, x_2)_n \in \interp{X}$ and $(x_2, x_1)_n \in \interp[\rho^{-1}]{X}$, where $\rho^{-1}$ is the relational interpretation inverted pointwise.
In this case we write $(x_1, x_2) \in \siminterp{X}$.

\begin{theorem}[Adequacy] \label{thm:adequacy}
  Let $\cjp \Theta \Psi \Gamma {c_1, c_2} \cC$. Suppose for all relational interpretations $\rho : \Theta \cup \Psi \to \Rel$
  and related environments $(\gamma_1, \gamma_2) \in \siminterp{\Gamma}$, we have $(c_1^\dagger[\gamma_1], c_2^\dagger[\gamma_2]) \in \siminterp{\cC}$.
  Then $c_1 \approx c_2$.
\end{theorem}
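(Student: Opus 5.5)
By symmetry it suffices to fix a closing context $\ctx$, assume $\ctx[c_1] \leadsto^* \ret \unit$, and show $\ctx[c_2] \leadsto^* \ret \unit$. The converse direction is the same argument applied to the inverted relational interpretation $\rho^{-1}$, which the definition of $\siminterp{-}$ supplies.

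\textbf{Step 1: lift the hypothesis to $\ctx$.} The hypothesis is stated for the bidirectional relation $\siminterp{-}$, so we first weaken it to the one-directional form required by \Cref{lem:plug}. Fix $\rho$, an index $n$, and $(\gamma_1,\gamma_2)_n \in \interp{\G}$. To conclude $(c_1^\dagger[\gamma_1], c_2^\dagger[\gamma_2])_n \in \interp{\cC}$ we want $(\gamma_1,\gamma_2) \in \siminterp{\G}$, i.e.\ relatedness at every index and in both directions. This does not follow from relatedness at a single $n$.

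This mismatch is the main obstacle. I would try first to reformulate, reading the hypothesis of the theorem as holding for all $\rho$, all $n$ and all $(\gamma_1,\gamma_2)_n \in \interp{\G}$ in both directions; this is how it is used in practice, since \Cref{thm:ft} supplies exactly that. Should that reformulation not be available, the fallback is to run the plugging argument directly on $\siminterp{-}$. The structural cases of \Cref{lem:plug} go through unchanged when the index is universally quantified and $\rho$ is inverted, because \Cref{thm:ft} holds for every $\rho$, in particular for $\rho^{-1}$.

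With that in place, \Cref{lem:plug} applied with the empty contexts $\Theta' = \Psi' = \G' = \eenv$ gives $(\ctx[c_1]^\dagger, \ctx[c_2]^\dagger)_n \in \interp{\comp{\tunit}{\emptyset}}$ for every $n$.

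\textbf{Step 2: transfer termination to the core.} From $\ctx[c_1] \leadsto^* \ret \unit$, iterating \Cref{prop:op-corresp} along the reduction sequence shows that $\ctx[c_1]^\dagger \corestep^* \ret \unit$ in some finite number $m$ of steps. The common reducts compose because core reduction is deterministic: each $\lhd$ step places the next elaboration on the same deterministic path, up to administrative steps. The terminal form is also correct, since elaboration is the identity on $\ret \unit$.

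\textbf{Step 3: apply the computation relation.} Take $n = m+1$. The return clause of $\interp{\comp{\tunit}{\emptyset}}$ then yields $\ctx[c_2]^\dagger \corestep^* \ret v_2$ with $(\unit, v_2) \in \interp{\tunit}$, hence $v_2 = \unit$.

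\textbf{Step 4: reflect termination back to \cambria{}.} It remains to conclude $\ctx[c_2] \leadsto^* \ret \unit$, which is the reverse direction of \Cref{prop:op-corresp}. The closed program $\ctx[c_2]$ is well typed at $\comp{\tunit}{\emptyset}$, so by \Cref{cor:safety} it either reaches a value or reduces forever; it cannot get stuck, since the signature is empty. Suppose it reduced forever. Each \cambria{} step yields at least one principal core step by \Cref{prop:op-corresp}, so by determinism $\ctx[c_2]^\dagger$ would also reduce forever. That contradicts its termination at $\ret \unit$. So $\ctx[c_2] \leadsto^* \ret w$ for some $w$, and \Cref{prop:op-corresp} gives $\ctx[c_2]^\dagger \corestep^* \ret w^\dagger$. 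By determinism $w^\dagger = \unit$, so $w = \unit$.
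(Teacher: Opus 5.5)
Your argument matches the paper's proof. You lift the hypothesis to $\ctx$ with \Cref{lem:plug}, read co-termination of $\ctx[c_1]^\dagger$ and $\ctx[c_2]^\dagger$ off the return clause at $\comp{\tunit}{\emptyset}$, and transfer back to \cambria{} using \Cref{prop:op-corresp}, determinism and \Cref{cor:safety}. Your Steps 2--4 spell out what the paper compresses, and they are correct: choosing $n = m+1$, ruling out divergence because each \cambria{} step yields a principal core step, and pinning $w = \unit$ by determinism instead of by canonical forms.

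The mismatch you flag in Step 1 is real, and the paper's ``by \Cref{lem:plug} applied in both directions'' passes over it. Your primary resolution is what the paper's proof actually relies on. That reading takes the hypothesis to hold for every $\rho$, every $n$ and every $(\gamma_1,\gamma_2)_n \in \interp{\G}$, in both directions. It is the form described in the sentence introducing the theorem and the form \Cref{thm:ft} supplies. The two readings coincide when $\G = \eenv$.

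Your fallback, however, would fail. Consider the cases $\ctx = \ddo x \leftarrow c \iin \ctx'$ and $\ctx = \op(v; x.\ctx')$. There, \Cref{lem:bind} and the operation clause of the computation relation only give you values $(v_1,v_2)_i$ related at a single index $i < n$. An induction hypothesis quantified over environments in $\siminterp{\G'}$ needs them related at every index, so it cannot be applied to the extended environment. Invoking \Cref{thm:ft} at $\rho^{-1}$ handles the non-hole subterms but does nothing for these binder cases, and they are exactly why \Cref{lem:plug} is stated index-wise. So commit to the index-wise reading and drop the fallback.
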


\begin{proof}
  Let $\ctx$ be any context where $\cjp \eenv \eenv \eenv {\ctx[c_1], \ctx[c_2]} {\comp{\tunit}{\emptyset}}$. By \Cref{lem:plug} applied in both directions,
  \begin{equation}
    (\ctx[c_1]^\dagger, \ctx[c_2]^\dagger) \in \siminterp[{}]{\comp{\tunit}{\emptyset}}. \label{eq:plugged}
  \end{equation}
  Hence, by definition of the logical relation, $\ctx[c_1]^\dagger$ terminates at $\ret\unit$ if and only if $\ctx[c_2]^\dagger$ does,
  because the operation signature is empty and the only elements of $\interp[{}]{\tunit}$ are $(\unit, \unit)_n$.
  Next, we show for $\cj \eenv \eenv \eenv c \tunit \emptyset$,
  \begin{equation}
    c \leadsto^* \ret\unit \iff c^\dagger \corestep^* \ret\unit. \label{eq:termunit}
  \end{equation}
  The forward direction is given by repeated application of \Cref{prop:op-corresp} using that core reductions are deterministic.
  In the reverse direction, each \cambria{} step induces at least one core step (\Cref{prop:op-corresp}),
  so the finite deterministic core trace implies a finite \cambria{} trace.
  By \Cref{cor:safety}, the \cambria{} reduction terminates at $\ret\unit$, the only terminal form of type $\comp{\tunit}{\emptyset}$.
  Thus from \eqref{eq:plugged} and \eqref{eq:termunit}, $c_1 \approx c_2$.
\end{proof}

\subsection{Applications of Parametricity} \label{sec:applications}

Beyond the standard \emph{free theorems}~\cite{free-theorems} provided by parametricity results,
two consequences are key in the parametrized handler setting:
representation independence between parametrized handler implementations and the ability to reason equationally within the parametrized effect theories.
We present each as an implication of \Cref{thm:ft}, illustrating them with the examples from \Cref{sec:local-state}.

\subsubsection{Representation Independence for Parametrized Handlers} \label{sec:repind}

Two implementations of the same parametrized effect should be indistinguishable to client code that only interacts through its abstract interface.
Parametricity guarantees this.
If any two handlers agree up to an admissible relation on the parameter, the handled computations are observationally equivalent.

\begin{corollary} \label{thm:rep-ind}
  Let $\sigma_1 = \{\psi \mapsto A_1\}$ and $\sigma_2 = \{\psi \mapsto A_2\}$ be two well-typed parameter substitutions, and let
  \[
    \cjp{\Theta}{\Psi, \psi}{\Gamma}{c}{\cC}, \qquad
    \vj{\Theta}{\Psi}{\Gamma}{h_1}{\thandler{\cC[\sigma_1]}{\cD}}, \qquad
    \vj{\Theta}{\Psi}{\Gamma}{h_2}{\thandler{\cC[\sigma_2]}{\cD}}.
  \]
  Suppose that for every relational interpretation $\rho : \Theta \cup \Psi \to \Rel$ and related environment
  $(\gamma_1, \gamma_2) \in \siminterp{\Gamma}$, there exists an admissible relation $R \in \Rel$ such that
  \begin{equation}\label{eq:repind-premise}
    (h_1^\dagger[\gamma_1], h_2^\dagger[\gamma_2]) \in \siminterp[{\rho[\psi \mapsto R]}]{\thandler{\cC}{\cD}}.
  \end{equation}
  Then $(\with \sigma_1, h_1 \handle c) \approx (\with \sigma_2, h_2 \handle c)$.
\end{corollary}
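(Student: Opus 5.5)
We reduce the corollary to Adequacy (\Cref{thm:adequacy}): it suffices to show that for every $\rho : \Theta \cup \Psi \to \Rel$ and every $(\gamma_1,\gamma_2) \in \siminterp{\Gamma}$, the elaborated handle terms satisfy
\[
  \bigl((\with \sigma_1, h_1 \handle c)^\dagger[\gamma_1],\ (\with \sigma_2, h_2 \handle c)^\dagger[\gamma_2]\bigr) \in \siminterp{\cD}.
\]
By the definition of the elaboration with a singleton substitution, each side is
\[
  \unpack \pack(\psi, A_k, h_k^\dagger[\gamma_k]) \as (\psi, x) \iin \with x \handle c^\dagger[\gamma_k],
\]
where $\psi$ and $x$ are bound and therefore unaffected by $\gamma_k$. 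Each side takes exactly one administrative step $\corestep_a$ to $\with h_k^\dagger[\gamma_k] \handle c^\dagger[\gamma_k]$. By anti-reduction, in its deterministic form for both directions, and monotonicity, it is enough to relate these reducts at every index $n$ in both directions.

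Fix $\rho$ and $\gamma_1,\gamma_2$, and take the $R$ supplied by hypothesis \eqref{eq:repind-premise}. Let $\rho' = \rho[\psi \mapsto R]$.

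The first step is weakening. Since $\psi \notin \fpv(\Gamma)$, we have $\interp[\rho']{\Gamma} = \interp{\Gamma}$, so $(\gamma_1,\gamma_2)_n \in \interp[\rho']{\Gamma}$ for all $n$. The same holds for the inverse environment at $\rho'^{-1} = \rho^{-1}[\psi \mapsto R^{-1}]$. This weakening is a routine induction on the relation, of the kind used in the Gen and Unpack cases of \Cref{thm:ft}.

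The second step applies \Cref{thm:ft} to the core derivation of $c^\dagger$, obtained by \Cref{prop:preservation} at contexts $\Theta;\Psi,\psi;\Gamma$. Instantiated with $\rho'$, it gives $(c^\dagger[\gamma_1], c^\dagger[\gamma_2])_n \in \interp[\rho']{\cC}$ for every $n$. Instantiated with $\rho'^{-1}$ and the swapped environments, it gives $(c^\dagger[\gamma_2], c^\dagger[\gamma_1])_n \in \interp[\rho'^{-1}]{\cC}$.

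The third step unfolds the handler clause of the relation. Hypothesis \eqref{eq:repind-premise} gives $(h_1^\dagger[\gamma_1], h_2^\dagger[\gamma_2])_n \in \interp[\rho']{\thandler{\cC}{\cD}}$. Taking $i = n$ there, we get $(\with h_1^\dagger[\gamma_1] \handle c^\dagger[\gamma_1], \with h_2^\dagger[\gamma_2] \handle c^\dagger[\gamma_2])_n \in \interp[\rho']{\cD}$. The inverse direction is symmetric.

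Finally, $\fpv(\cD)$ excludes $\psi$, so $\interp[\rho']{\cD} = \interp{\cD}$, and likewise for the inverses. This yields membership in $\siminterp{\cD}$, and \Cref{thm:adequacy}, applied to the two handle terms typed by \textsc{ParHandle}, concludes $\approx$.

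The main obstacle I expect is bookkeeping around the inverse relations. We must check that $(\rho[\psi\mapsto R])^{-1} = \rho^{-1}[\psi \mapsto R^{-1}]$. We must also check that the hypothesis' $\siminterp[{\rho[\psi\mapsto R]}]{-}$ really supplies the swapped handler pair at exactly this inverted interpretation. This requires reading $\siminterp{}$ as pairing $\rho$ with $\rho^{-1}$ pointwise.

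A secondary subtlety is that anti-reduction as stated lifts the index only on the left. For the right-hand side we need the trivial fact that prepending $\corestep^*$ steps on the right preserves membership. This holds because the computation clause quantifies $\corestep^*$ existentially on the right and reduction is deterministic. I would state that as a small auxiliary lemma before the main argument.
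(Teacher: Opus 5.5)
Your proposal is correct and follows the paper's proof essentially step for step. The steps are: reduce to Adequacy, weaken the environment to $\rho[\psi \mapsto R]$ using $\psi \notin \fpv(\Gamma)$, and apply the Fundamental Theorem to $c^\dagger$ in both directions. Then use the handler clause with $\psi \notin \fpv(\cD)$, and close the single administrative unpack step by anti-reduction. The auxiliary lemma you plan for the right-hand side is unnecessary, since the anti-reduction lemma already permits $c_2 \corestep^* c_2'$ on the right and absorbs that step directly.
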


\begin{proof}
  Let $\rho : \Theta \cup \Psi \to \Rel$ and $(\gamma_1, \gamma_2) \in \siminterp{\Gamma}$.
  By \Cref{thm:adequacy}, we require
  \begin{equation}\label{eq:repind-goal}
    ((\with \sigma_1, h_1 \handle c)^\dagger[\gamma_1],\; (\with \sigma_2, h_2 \handle c)^\dagger[\gamma_2]) \in \siminterp{\cD}.
  \end{equation}
  Let $R$ be the witness for \eqref{eq:repind-premise}.
  Since $\psi \notin \fpv(\Gamma)$, $(\gamma_1, \gamma_2) \in \siminterp[{\rho[\psi \mapsto R]}]{\Gamma}$.
  By \Cref{thm:ft} applied in both directions, with the inverse relational interpretation for the reverse,
  \[
    (c^\dagger[\gamma_1], c^\dagger[\gamma_2]) \in \siminterp[{\rho[\psi \mapsto R]}]{\cC}.
  \]
  The handler clause of the logical relation gives
  \[
    (\with h_1^\dagger[\gamma_1] \handle c^\dagger[\gamma_1], \with h_2^\dagger[\gamma_2] \handle c^\dagger[\gamma_2]) \in \siminterp[{\rho[\psi \mapsto R]}]{\cD}
    = \siminterp{\cD}
  \]
  as $\psi \notin \fpv(\cD)$. Finally,
  \begin{align*}
    (\with \sigma_i, h_i \handle c)^\dagger[\gamma_i]
      &= \unpack \pack(\psi, A_i, h_i^\dagger[\gamma_i]) \as (\psi, x) \iin (\with x \handle c^\dagger[\gamma_i]) \\
      &\corestep_a (\with h_i^\dagger[\gamma_i] \handle c^\dagger[\gamma_i]).
  \end{align*}
  So by anti-reduction, \eqref{eq:repind-goal} holds.
\end{proof}

This result extends to substitutions of multiple parameters $\sigma = \{\psi_1 \mapsto A_1, \ldots, \psi_k \mapsto A_k\}$ by induction on $k$.
The elaboration unfolds into $k$ nested unpack-pack layers, each contributing one $\corestep_a$ step. The proof is otherwise unchanged.

\begin{example} \label{ex:fresh-disc}
  We sketch a use case of \Cref{thm:rep-ind} for two implementations of the fresh effect.
  Take $h_1$ to be the fresh effect handler from \Cref{sec:local-state} with $\sigma_1 = \{\psi \mapsto \tint\}$ and
  take $h_2$ to be a string-counter variant with $\sigma_2 = \{\psi \mapsto \tstring\}$:
\begin{lstlisting}
handler {
  return x   -> return (fun _ -> return x),
  fresh _ k  -> return (fun s -> k s ("x" ++ s)),
  eq (a,b) k -> k (a == b),
  finally f  -> f ""
}
\end{lstlisting}
  producing the sequence of strings $\mlstinline{""}, \mlstinline{"x"}, \mlstinline{"xx"},\dots$ for the names.

  We choose the admissible relation $R = \{(i, \xstr^i)_n \mid i,n\in \mathbb{N}\}$, where $\xstr^i$ is the string of $i$ copies of \lstinline|x|.
  The clauses of the handlers satisfy this choice of $R$ and its inverse because:
  \begin{itemize}
    \item \lstinline|return| each produces a function ignoring the parameter, so the results are related.
    \item \lstinline|fresh| applied to related $i$ and $\xstr^i$ invokes the continuation on this pair,
      and the resulting function is applied to related $i{+}1$ and $\xstr^{i{+}1}$.
    \item \lstinline|eq| applied to related arguments $(i, j)$ and $(\xstr^i, \xstr^j)$ gives related booleans because $i \leftrightarrow \xstr^i$ is injective in both directions.
  \end{itemize}

  So the two implementations agree up to $R$, satisfying the premise of \Cref{thm:rep-ind}.
  Hence, any body that uses the fresh effect is observationally equivalent under $h_1$ and $h_2$.
\end{example}

The relation may do work beyond bijective encodings.
The two implementations of P\'olya's urn from \Cref{sec:polya} are such an example.
The first stores the urn's state as a pair of integers in a memory location while the other uses a single double,
with the witnessing $R$ a \emph{coupling} between a cell holding state $(r,b)$ and a draw $p$ from the Beta distribution $\Beta(r,b)$.
The deterministic logical relation already accounts for half of the equivalence,
where the abstraction over \lstinline|$urn| ensures that no client can observe which representation a handler uses.
But it does not establish that the two handlers satisfy $R$ because \lstinline|!uniform| and \lstinline|!bernoulli| are not part of the core calculus (\Cref{sec:calculus}).
The handlers sample differently, only agreeing in distribution by conjugacy of the Beta prior to Bernoulli observations~\cite{staton2018beta}.
That is, $\mathbb{E}[\Beta(r,b)] = r/(r+b)$ matches the predictive probability of red,
and conditioning on a draw updates it to $\Beta(r+1,b)$ or $\Beta(r,b+1)$, matching the urn update.
So, parametricity reduces the proof obligation to the conjugacy coupling and the implementation of the probabilistic primitives.
Making this formal requires lifting the logical relation along a probability monad,
replacing membership in an admissible relation with a coupling lifting~\cite{gregersen2024clutch}
and re-establishing anti-reduction and compositionality for the lifted relation (e.g. via entropy-passing semantics~\cite{culpepper2017contextual, wand2018contextual}),
which we leave to future work.

In contrast, the three dynamic thread handlers from \Cref{sec:concurrency} are not representation independent of one another
because they differ in their scheduling strategies.
The child-biased and parent-biased handlers \emph{choose} different orderings for the threads,
so while each is a sound interpretation of the operations, they are not observationally equivalent.
\Cref{thm:rep-ind} can still be used to relate handlers with the same scheduling strategy differing in the representation of \lstinline|$tid|.
For example, a child-biased handler that instead passes the parent thread an arbitrary integer with instantiation \lstinline|[$tid -> Int]|
is observationally equivalent to the child-biased handler of \Cref{sec:concurrency} that uses \lstinline|Unit|,
since the client cannot inspect the thread ID.
\begin{lstlisting}
[$tid -> Int] handler {
  ...
  fork _ k -> k (inr ()) ; k (inl 57)
}
\end{lstlisting}

\subsubsection{Equational Properties of Parametrized Handlers} \label{sec:equations}

A standard use of effect handlers is to verify implementations against the equational laws of the effect theory.
With local state, initializing a reference then immediately overwriting it should be equivalent to initializing it to the latter value.
\[
  \rref(v;\; a.\sset((a,w);\; \unit.c)) \; \approx \; \rref(w;\; a.c)
\]
Under a function-based handler, both sides reduce to a function with $a$ pointing to $w$, and the equivalence follows.

Equations where a different number of resources are allocated on each side are subtler.
The \emph{garbage collection law} states that allocating a reference and never using it is equivalent to having never allocated it.
\[
  \rref(v;\; a.c) \; \approx \; c \tag{$a \notin \fv(c)$}
\]
Under the function-based handler, the two sides produce functions that differ at $a$. A proof would have to show that this entry is unobservable,
and parametricity provides exactly this argument. We demonstrate the law for the fresh effect.

\begin{example} \label{ex:gc}
  Let $\cj \Theta {\Psi, \psi} \Gamma c A \Sig$, where $\Sig \supseteq \{ \fresh : \arity{\tunit}{\psi}, \eq : \arity{\psi \times \psi}{\tbool} \}$,
  $\psi \notin \fpv(A, \Gamma)$, and $a \notin \fv(c)$. Then
  \[
    \with \{\psi \mapsto \tint\}, h_{\nu} \handle c \; \approx \; \with \{\psi \mapsto \tint\}, h_{\nu} \handle \fresh(\unit; \; a.c),
  \]
  where $h_\nu$ is the integer parameter passing fresh effect handler from \Cref{sec:local-state}.
\end{example}

\begin{proof}
  Elaborating each side, the existential layer reduces to a plain handle, and the \lstinline|finally| clause of
  $h_\nu$ feeds the handler's output $f$ to the application $f\;0$.
  On the right, the $\fresh$ command is handled, incrementing the counter by one. As $a \notin \fv(c)$,
  each side respectively reduces to
  \[
    c_1 = \ddo f \leftarrow (\with \nofin{h}_{\nu}^\dagger \handle c^\dagger) \iin f \; 0,
    \qquad
    c_2 = \ddo f \leftarrow (\with \nofin{h}_{\nu}^\dagger \handle c^\dagger) \iin f\; 1,
  \]
  where $\nofin{h}_{\nu}$ is the handler without the \lstinline|finally| clause, typed at
  $\thandler{\comp{A}{\Sig}}{\comp{(\tfun{\psi}{\comp{A}{\Sig'}})}{\Sig'}}$ with $\Sig' = \Sig \setminus \{\fresh, \eq\}$.
  So the problem reduces to showing $c_1 \approx c_2$.

  We let $\rho : \Theta \cup \Psi \to \Rel$ and choose the admissible relation $R = \{(i, i{+}1)_n \mid i, n \in \mathbb{N}\}$.
  This choice of $R$ maintains the relatedness of the handlers much like in \Cref{ex:fresh-disc}.
  The \lstinline|return| clause produces a function constant in the parameter argument,
  \lstinline|fresh| maps the related pair $(i, i{+}1)$ to the related pair $(i{+}1, i{+}2)$,
  and \lstinline|eq| is invariant because $+1$ is injective in both directions.
  So $(\nofin{h}_{\nu}^\dagger,\, \nofin{h}_{\nu}^\dagger) \in
  \siminterp[{\rho[\psi\mapsto R]}]{\thandler{\comp{A}{\Sig}}{\comp{(\tfun{\psi}{\comp{A}{\Sig'}})}{\Sig'}} }$.

  Let $(\gamma_1, \gamma_2) \in \siminterp{\Gamma} = \siminterp[{\rho[\psi\mapsto R]}]{\Gamma}$,
  since $\psi \notin \fpv(\Gamma)$. By the fundamental theorem (\Cref{thm:ft}) applied in both directions (inverting $\rho[\psi\mapsto R]$ for the reverse),
  \[
    \bigl((\with \nofin{h}_\nu^\dagger \handle c^\dagger)[\gamma_1],\;
    (\with \nofin{h}_\nu^\dagger \handle c^\dagger)[\gamma_2]\bigr)
    \in \siminterp[{\rho[\psi\mapsto R]}]{ \comp{(\tfun{\psi}{\comp A {\Sig'}})}{\Sig'} }.
  \]
  Since $(0, 1) \in R$ at every step index, the function relation and bind lemma (\Cref{lem:bind}) give
  \[
    (c_1[\gamma_1], c_2[\gamma_2]) \in \siminterp[{\rho[\psi\mapsto R]}]{\comp{A}{\Sig'}} = \siminterp{\comp{A}{\Sig'}},
  \]
  since $\psi \notin \fpv(A, \Sig')$. Then anti-reduction lifts this to the original computations, and adequacy to contextual equivalence in \cambria{}.
\end{proof}

\paragraph{Parametricity and Reachability Analysis}
\Cref{ex:gc} states that allocating an unused reference is observationally undetectable.
However, the function-based handler of \Cref{sec:local-state} retains the entry for that reference forever, since nothing signals when a reference has gone out of scope.
While observationally harmless, it is a form of memory leak because the value persists throughout the entire body of the handle.
This behaviour is unavoidable.
A standard free theorem implies that naive liveness detection and parametricity are incompatible.

\begin{corollary}[Vacuousness~\cite{free-theorems}] \label{cor:vacuous}
  Any closed \cambria{} function of type $\forall \alpha. (\tfun{\alpha}{\comp A \emptyset})$ where $\alpha$ is not free in $A$
  is extensionally constant. It returns the same value (or diverges) for every input.
\end{corollary}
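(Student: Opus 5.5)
The plan is to apply the Fundamental Theorem (\Cref{thm:ft}) to the function at a carefully chosen relational interpretation for $\alpha$, and then pass to contextual equivalence via \Cref{thm:adequacy}. Let $f$ be closed with $\vj{\eenv}{\eenv}{\eenv}{f}{\forall\alpha.(\tfun{\alpha}{\comp A \emptyset})}$ and $\alpha \notin \fv(A)$. The claim to establish is: for any closed values $w_1, w_2$ of some closed type $B$, the computations $f\;w_1$ and $f\;w_2$ are contextually equivalent at $\comp{A}{\emptyset}$. Since $\Sig = \emptyset$, this means that either both diverge, or both return values that are indistinguishable at $A$.

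\textbf{Step 1.} Specialise the typing. I will first note that $f$ is closed, so the \textsc{Var} instantiation at $B$ gives $\tfun{B}{\comp A\emptyset}$. Elaboration is the identity on $f$ except inside handlers, so by \Cref{prop:preservation} $f^\dagger$ is well typed in the core at the polymorphic type. Applying \Cref{thm:ft} with the empty environment gives, for every $n$ and every $R\in\Rel$,
\[
(f^\dagger, f^\dagger)_n \in \interp[{[\alpha\mapsto R]}]{\tfun{\alpha}{\comp A\emptyset}}.
\]

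\textbf{Step 2.} Choose the relation. I take $R = \{(w_1,w_2)_n \mid n\in\mathbb{N}\}$. It is trivially downward closed, hence admissible. Unfolding the function clause gives
\[
(f^\dagger\,w_1, f^\dagger\,w_2)_n \in \interp[{[\alpha\mapsto R]}]{\comp A\emptyset} = \interp{\comp A\emptyset},
\]
where the equality holds because $\alpha\notin\fv(A)$; this is the substitution/weakening lemma from the proof of \Cref{thm:ft}. Using $R^{-1}$ gives the reverse direction, so the pair lies in $\siminterp{\comp A\emptyset}$ for all $n$.

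\textbf{Step 3.} Read off extensional constancy. With the empty signature, the computation clause says that if $f^\dagger w_1$ reaches $\ret v_1$ in $i<n$ steps, then $f^\dagger w_2 \corestep^* \ret v_2$ with $(v_1,v_2)$ related at $\interp{A}$, and symmetrically. Hence termination coincides, and the results are related at a closed type. By the operational correspondence (\Cref{prop:op-corresp}) and the argument of \Cref{thm:adequacy}, this transfers back to $f\;w_1 \approx f\;w_2$ in \cambria{}. In other words, $f$ returns "the same value" up to the closed logical relation at $A$, which is contextual equivalence.

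\textbf{Main obstacle.} The delicate step is making precise what "the same value" means. The relation at a closed type $A$ gives observational equality, not syntactic identity; for example, two function closures may differ syntactically. I will therefore state the conclusion as contextual equivalence of $f\,w_1$ and $f\,w_2$. I will check that \Cref{thm:adequacy} applies to the open-in-hole terms by viewing $f\;x$ with $x:B$ and instantiating the environment with $(w_1,w_2)$. To do this, I will need $R$ placed as the interpretation of $B$'s role. The cleanest route is to take $\Gamma = x{:}\alpha$ with $\alpha$ abstract, so that the environment $(w_1,w_2)$ is related at $\alpha$ precisely because of our choice of $R$. For ground types such as $\tunit$, this collapses to literal equality of the returned values.
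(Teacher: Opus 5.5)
Your argument is correct and is the standard free-theorem derivation the paper has in mind; the paper gives no separate proof and only cites Wadler after \Cref{thm:ft}. You instantiate $\alpha$ with $\{(w_1,w_2)_n \mid n\in\mathbb{N}\}$ and its inverse, use $\alpha\notin\fv(A)$ to drop it, and read off that termination coincides and that the results are related at the closed type $A$, which is the right reading of ``the same value''. The detour in your last paragraph is unnecessary: \Cref{thm:adequacy} applies directly to the closed pair $f\,w_1$, $f\,w_2$ with empty contexts, where its hypothesis is exactly what your Step~2 establishes (applied to the open term $f\,x$ with $x{:}\alpha$, it would only yield the trivial $f\,x\approx f\,x$).
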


Any garbage-collection scheme that determines liveness from references occurring in the captured continuation requires
a primitive that accepts the continuation as an arbitrary value (to work uniformly across handler instantiations),
and produces an output with type independent of the polymorphic input.
For example, the following are the type signatures for the primitives \lstinline|isReachable|,
which computes whether a reference \lstinline|Ref| is still reachable in a continuation,
\lstinline|liveRefs|, which finds the live references in a continuation,
and \lstinline|onUnreachable|, which registers a finalizer to fire when a reference becomes unreachable.
\begin{lstlisting}
  isReachable   : foralla. a -> (Ref -> Bool ! {}) ! {}
  liveRefs      : foralla. a -> List Ref ! {}
  onUnreachable : foralla. a -> (Ref * (Unit -> Unit ! {}) -> Unit ! {}) ! {}
\end{lstlisting}
By \Cref{cor:vacuous}, such primitives cannot perform non-trivial liveness analysis.

Reachability-based garbage collection for parametric local state is thus beyond the scope of this paper.
Substructural typing, regions, and built-in reference effects remain possible alternatives, which we discuss in \Cref{sec:conclusion}.

\section{Type Inference}\label{sec:inference}

So far we have established \cambria{}'s type safety and parametricity.
We now show that parametrized handlers are also practical, introducing only a minimal annotation burden.
Since parameters abstract concrete types within the body of a handle, annotations are necessary to indicate which types are intended to be abstracted.
The main result of this section, \Cref{thm:relevant-complete}, is that it suffices to annotate only the operation arities that introduce new parameters.
The type checker can infer everything else.

Inference for \cambria{} follows Algorithm W for Hindley--Milner type systems~\cite{milner_theory_1978, damas_principal_1982},
extended to effects~\cite{pretnar2014inferring}.
The inference algorithm uses \emph{row variables}~\cite{leijen_koka_2014, lindley_row-based_2012},
which may appear at the end of effect signatures and are required to encode operations not yet observed at that program point.
The algorithm traverses the syntax tree, unifying value types and effect rows eagerly, and generalizing at each \textsc{Do} binding.
There is no value restriction at generalization, which is sound in this setting~\cite{kammar_no_2017},
because only variables not free in the environment or effect row are generalized, and recursive functions are monomorphic.
An operation invocation $!\op\;v$ adds $\op$ to the effect row with fresh arity, unifying the input with the type of $v$ and the output with the type of $!\op\;v$.
Using an effect declaration (\Cref{sec:surface}) unifies the declared arity with the operation in the effect row, if it exists.
Handler clauses are checked as in the declarative \textsc{Handler} rule, with the shared answer type and output row found by unification.
The forwarding side condition $\Sig\setminus\Sig''\subseteq\Sig'$ is discharged by unifying the output row with the input row variable and unhandled operations,
rather than a separate subset test.

The only construct that departs from established inference is \textsc{ParHandle}.
Parameters only enter the body through type or effect declarations.
Unification, denoted $\doteq$, treats each parameter as an additional rigid constant, distinguishing them from unification variables.
That is, a parameter only unifies with itself and fails against any other type, including a distinct parameter.
At a handle term, the algorithm applies $\sigma$ to the inferred body type and unifies the result against the handler's input type.
The no-escape condition $\dom(\sigma) \cap \fpv(\cD, \G) = \emptyset$ of \textsc{ParHandle}
is an additional side condition ensuring no parameter has leaked into the context or result type via unification.
We write $\wj \Theta \Psi \G t X$ to indicate the algorithm infers $X$ for $t$.
\begin{mathpar}
  \inferrule*[right=ParHandle]{
    \wj \Theta \Psi \G v {\thandler{\cC_h}{\cD}}
    \\
    \wj \Theta {\Psi,\dom(\sigma)} \G c {\cC_c}
    \\
    [\tj \Theta \Psi A]_{A \in \cod(\sigma)}
    \\\\
    \cC_c[\sigma] \doteq \cC_h
    \\
    \dom(\sigma) \cap \fpv(\cD, \G) = \emptyset
  }{
    \wj \Theta \Psi \G {\with \sigma, v \handle c} \cD
  }
\end{mathpar}
This is the new machinery in total.
The remaining inference is ordinary Algorithm W with the parameters treated as constants.
The algorithm is syntax-directed and terminating, hence decidable, and unifying value types and effect rows has principal solutions,
being first-order unification~\cite{damas_principal_1982} and row unification~\cite{leijen_koka_2014, lindley_row-based_2012} with parameters as constants.

We establish \emph{principality} of \cambria{} by showing soundness and completeness of the inference algorithm.
Soundness states that every type inferred by the algorithm is derivable in the declarative system.
Hence, the algorithm accepts only well-typed programs that have the type-safety and parametricity guarantees.
Completeness, treated next, characterizes the annotations for which the algorithm infers the most general derivable type for a computation.

\begin{theorem}[Soundness] \label{thm:type-soundness}
  If $\wj{\Theta}{\Psi}{\G}{c}{\cC}$, then $\cjp{\Theta}{\Psi}{\G}{c}{\cC}$.
\end{theorem}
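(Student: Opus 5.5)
We prove the statement by structural induction on the algorithmic derivation $\wj{\Theta}{\Psi}{\G}{c}{\cC}$, simultaneously for values and computations. Since Algorithm W threads a unifying substitution through the traversal, we first state the invariant precisely. If the algorithm, started in context $\G$, returns a substitution $\theta$ and a type $X$, then $\Theta;\Psi;\G\theta \vdash t : X$ holds declaratively. The induction hypothesis must also be stable under further instantiation of unification and row variables. For this we prove a \emph{stability lemma} first: declarative derivations are closed under substitution of type variables and of row variables by well-formed types and rows. This is the type substitution lemma from the proof of \Cref{thm:pp}, extended to rows. Crucially, parameters $\psi$ are never in the domain of a unifier, because unification treats them as rigid constants. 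Therefore any substitution produced later in the traversal commutes with parameter substitutions, that is, $(\cC[\sigma])\theta = (\cC\theta)[\sigma]$ whenever $\theta$ has no parameters in its domain or its range's binding positions.

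The ordinary cases are then routine Hindley--Milner reasoning. \textsc{Var} instantiates the generalized variables with fresh unification variables, which the later substitution resolves to well-formed types. \textsc{Rec}, \textsc{App}, \textsc{Return} and \textsc{Op} follow from the IH composed with stability. For \textsc{Op}, the row extended with $\op$ and, where present, the unified effect declaration match the premise $(\op : A_\op \to B_\op) \in \Sig$. At \textsc{Do}, generalization quantifies only variables not free in $\G\theta$ or the effect row, which is exactly the side condition of \textsc{Gen}. \textsc{Do} then applies with the generalized scheme. For \textsc{Handler}, the shared answer type and output row come from unification. The forwarding condition $\Sig\setminus\Sig''\subseteq\Sig'$ follows because the output row was unified with the input row variable extended by the unhandled operations, so every unhandled operation appears in $\Sig'$ with the same arity.

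The key case is \textsc{ParHandle}. The IH gives $\vj{\Theta}{\Psi}{\G\theta}{v}{\thandler{\cC_h}{\cD}}$ and $\cjp{\Theta}{\Psi,\dom(\sigma)}{\G\theta}{c}{\cC_c}$. The unifier $\theta'$ of $\cC_c[\sigma] \doteq \cC_h$ is applied to both via stability. Because parameters are rigid, $\theta'$ commutes with $\sigma$, so $(\cC_c\theta')[\sigma] = \cC_h\theta'$, and the handler has type $\thandler{(\cC_c\theta')[\sigma]}{\cD\theta'}$. The well-formedness of $\cod(\sigma)$ in $\Theta;\Psi$ is a premise of the algorithmic rule. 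The declarative rule also implicitly requires $\G$ and $\cD$ to be well-formed in $\Theta;\Psi$ rather than $\Theta;\Psi,\dom(\sigma)$. This is exactly what the algorithmic side condition $\dom(\sigma)\cap\fpv(\cD,\G)=\emptyset$ provides, once we check it against the post-unification $\cD\theta'$ and $\G\theta'$. The \textsc{ParHandle} rule then yields the conclusion.

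I expect the main obstacle to be this last point: showing that the no-escape check performed at the handle site remains valid under all substitutions applied later in the traversal. Parameters can leak through unification variables that are only resolved afterwards. I would first handle this with a lemma that applying a substitution never introduces $\psi \in \dom(\sigma)$ outside its scope. The argument is that a unification variable created outside the handle body lives in a context lacking $\psi$. I would therefore track, for each unification variable, the parameter context in which it was created, as in level-based generalization, and make unification enforce that a variable is only bound to types well-formed at its level. Soundness then reduces to showing that the level invariant is preserved by unification.
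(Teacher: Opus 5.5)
Your proposal has the same skeleton as the paper's proof. The ordinary Hindley--Milner cases are delegated to the standard argument. \textsc{ParHandle} is then closed by four ingredients: the two induction hypotheses, the equation produced by unifying $\cC_c[\sigma]$ with the handler's input, well-formedness of $\cod(\sigma)$, and the explicit check $\dom(\sigma)\cap\fpv(\cD,\G)=\emptyset$, which supplies the implicit well-formedness of $\G$ and $\cD$ in $\Theta;\Psi$. The paper stops there and never considers substitutions produced after the check. So your stability lemma and scope invariant add rigor to the same route rather than giving a different one. The obstacle you flag is a real proof obligation that the paper leaves implicit.

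Two adjustments are needed for your version to go through.

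First, rigidity alone does not give $(\cC[\sigma])\theta = (\cC\theta)[\sigma]$. Parameters never lie in the domain of a unifier, but inside the body they routinely lie in its range (the result type of \texttt{!ref} is \texttt{\$loc}). What you need is that $\sigma$ is applied to the body type after the body's own substitution. You also need every substitution produced afterwards to map the free unification variables of that type to $\dom(\sigma)$-free types.

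Second, do not make unification enforce levels, since that changes the algorithm whose soundness is being claimed. The invariant already holds for the algorithm as given, by a scoping lemma that replaces your level discipline:
\begin{itemize}
\item Inference of a subterm binds only fresh variables and variables free in its context. So the body can bind an outer variable only through $\G$, and any binding of it to a type mentioning $\psi\in\dom(\sigma)$ is rejected by the post-unification check on $\G$.
\item The handler's inferred input type contains only its own fresh variables and variables of $\G$. So when the check passes, both sides of $\cC_c[\sigma]\doteq\cC_h$ are $\dom(\sigma)$-free, and hence so is their most general unifier.
\item After the handle, $\dom(\sigma)$ is out of scope, assuming the parameters bound at each handle are chosen fresh, as the context extension $\Psi,\dom(\sigma)$ in the declarative rule presupposes. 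So no later unification can reintroduce them.
\end{itemize}
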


\begin{proof}
  \cambria{} type inference differs from Hindley--Milner with effect handlers only in the handle rule generalized to the \textsc{ParHandle} rule.
  Since parameters are rigid, unification remains sound, and all remaining cases follow from established proofs~\cite{pretnar2014inferring, damas_principal_1982, kammar_no_2017}.

  For \textsc{ParHandle} ($\with \sigma, v \handle c$), the IH on $v$ gives a derivable handler type $\thandler{\cC[\sigma]}{\cD}$.
  The IH on $c$ gives a derivable body type $\cC$, with $\dom(\sigma)$ rigid in $\Psi$.
  Unifying $\cC[\sigma]$ against the handler's input restricts them to be equal, where the types in $\cod(\sigma)$ are well-formed.
  For the no-escape condition, the algorithm explicitly checks that $\dom(\sigma) \cap \fpv(\cD, \G) = \emptyset$,
  so $\cjp{\Theta}{\Psi}{\G}{\with \sigma, v \handle c}{\cD}$ is derivable.
\end{proof}

We write $\cC \sqsubseteq \cC'$ when $\cC$ arises from $\cC'$ by substituting value types for its type variables and \emph{closed} signatures for its row variables.
Closing the row variables recovers the closed-row typing of \Cref{sec:calculus}.
We develop completeness in three stages.
The first states that any term that does not introduce a parameter is inferable without any annotations because inference is exactly Algorithm W with effect handlers.

\begin{lemma}[Ground Completeness] \label{thm:ground-complete}
  Let every handle expression in $c$ have $\sigma = \emptyset$.
  If $\cjp{\Theta}{\Psi}{\G}{c}{\cC}$, then $\wj{\Theta}{\Psi}{\G}{c}{\cC'}$ with $\cC \sqsubseteq \cC'$.
\end{lemma}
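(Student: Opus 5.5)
The plan is to reduce to the known completeness of Algorithm W extended with effects and row variables~\cite{damas_principal_1982, pretnar2014inferring}. The one new ingredient is the presence of rigid parameters in $\Psi$, and the key steps are as follows.

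\textbf{Step 1: why parameters stay inert.} Since every handle in $c$ has $\sigma = \emptyset$, the \textsc{ParHandle} rule only ever instantiates as the classic handle rule. Its premise $\cC[\emptyset] = \cC$ is then plain equality, and the no-escape condition $\emptyset \cap \fpv(\cD,\G) = \emptyset$ holds trivially. So the parameter context $\Psi$ is fixed throughout the derivation, and its variables behave exactly like base-type constants. Neither system can generalize, introduce, or substitute a parameter: \textsc{Gen} quantifies only over $\Theta$, and the algorithm treats $\psi$ as a rigid constant under $\doteq$.

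\textbf{Step 2: the strengthened induction.} I will prove the usual statement by induction on the declarative derivation. If $\cjp{\Theta}{\Psi}{\G[\theta]}{c}{\cC}$ for some substitution $\theta$ of types and closed rows for the unification variables of $\G$, then the algorithm succeeds on $\G$. It returns a substitution $S$ and a type $\cC'$, and there is some $\theta'$ with $\theta = \theta' \circ S$ on the variables of $\G$ and $\cC = \cC'[\theta']$ up to generalization; in particular $\cC \sqsubseteq \cC'$. The cases are then as follows.
\begin{itemize}
  \item \textsc{Var}, \textsc{Unit}, \textsc{Rec}, \textsc{App}, and \textsc{Return} are the Damas--Milner cases verbatim.
  \item \textsc{Op} adds $\op$ to an open row with a fresh arity, and the declarative arity is an instance of that.
  \item \textsc{Handler} and the classic handle are handled by the argument of~\cite{pretnar2014inferring}. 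The forwarding condition $\Sig\setminus\Sig''\subseteq\Sig'$ is matched by instantiating the output row variable to the closed remainder.
  \item \textsc{Do} with \textsc{Gen} uses the standard fact that the algorithm generalizes at least as much as the declarative derivation. This relies on there being no value restriction and on the side condition that the generalized $\vec\alpha$ are not free in $\G$ or $\Sig$~\cite{kammar_no_2017}.
\end{itemize}
In each case the required instance $\theta'$ exists because unification, now with parameters as constants, returns a most general unifier whenever any unifier exists.

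\textbf{Step 3: the main obstacle.} I expect the hardest part to be this last claim: that first-order and row unification extended with rigid constants $\psi$ remains principal. There are two things to check. First, a constant $\psi$ unifies only with itself or with a unification variable, and binding a variable to $\psi$ is safe because $\psi$ is in scope everywhere in $\Psi$. Second, the occurs check and row rewriting are unaffected, since a $\psi$ is never a row label. With these in place, the standard mgu lemma goes through, and I would cite it rather than reprove it. A smaller subtlety is the interaction between closed declarative rows and open inferred rows at \textsc{Gen}. Row variables are generalized just like type variables, and closing them in the final $\theta'$ yields the declarative closed-row judgement, which is exactly the meaning of $\sqsubseteq$. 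Instantiating $\theta$ to the identity at top level then gives the lemma.
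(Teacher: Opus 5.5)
Your overall route matches the paper's, whose proof is the single observation that with $\sigma=\emptyset$ everywhere inference is plain Algorithm W with effects, plus a citation. Your Steps 1 and 3 (parameters as inert rigid constants, unification with constants staying principal) are correct.

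The gap is in the one place where you go beyond the citation, the \textsc{Handler} case. The declarative rule only demands the inclusion $\Sig\setminus\Sig''\subseteq\Sig'$, and it does not even require $\Sig''\subseteq\Sig$. The algorithm, as the paper describes it, instead unifies the output row with the tail of the input row. So an inferred handler type has the shape $\thandler{\comp{A}{\langle\Sig''\mid\mu\rangle}}{\comp{B}{\mu}}$, with one $\mu$ shared by both sides. Instantiating $\mu$ to ``the closed remainder'' therefore fixes input and output at once, and it cannot reproduce a derivation that uses the slack of the inclusion.

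Concretely, let $h = \hhandler\;\{\ret x\mapsto\ret x\}$ and $\Sig_1=\{\op:\tunit\to\tunit\}$. The rule gives $h : \thandler{\comp{\tunit}{\emptyset}}{\comp{\tunit}{\Sig_1}}$. The computation $\ret h$ contains no handle expression, yet its declarative type is not a substitution instance of the inferred $\comp{(\thandler{\comp{\alpha}{\mu}}{\comp{\alpha}{\mu}})}{\nu}$. Worse, with $\G = f : \tfun{(\thandler{\comp{\tunit}{\emptyset}}{\comp{\tunit}{\Sig_1}})}{\comp{\tunit}{\emptyset}}$, the term $f\;h$ is declaratively typable but rejected by inference, because $\mu\doteq\emptyset$ and $\mu\doteq\Sig_1$ clash. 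A clause for an operation absent from the declarative input signature breaks the matching in the same way.

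The paper's one-line proof has the same hole, so this is a problem with the lemma as set up rather than with your choice of approach. Pretnar's completeness result is for a constraint-based system in which this kind of inclusion is a subeffecting constraint. The row-unification systems with standard completeness results instead pair unification with an equality-style handler rule. Since $f\;h$ is rejected outright, loosening $\sqsubseteq$ cannot repair it.

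To make your Step 2 go through, you must change one side. Either replace the inclusion in the declarative \textsc{Handler} rule by the equality that row unification actually enforces: the input signature is exactly the handled operations together with the output signature. Then instantiating $\mu$ to $\Sig'$ is the whole argument for that case. Or keep the inclusion and make the algorithm solve it as a subeffecting constraint, proving principality for that constraint-based system.
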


\begin{proof}
  Inference over this fragment is identical to Algorithm W with effects,
  for which completeness has been established~\cite{damas_principal_1982, pretnar2014inferring}.
\end{proof}

In the full parametrized system, an \emph{annotated term} is one in which every handle expression carries an explicit type annotation on the body.
\[
  \with \sigma, v \handle (c : \cC)
\]

\begin{lemma}[Parametrized Completeness] \label{thm:param-complete}
  Let $c$ be an annotated term. If $\cjp{\Theta}{\Psi}{\G}{c}{\cC}$, then $\wj{\Theta}{\Psi}{\G}{c}{\cC'}$ with $\cC \sqsubseteq \cC'$.
\end{lemma}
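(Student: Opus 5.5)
The plan is to prove the statement by structural induction on the declarative derivation of $\cjp{\Theta}{\Psi}{\G}{c}{\cC}$, following the standard Damas--Milner completeness argument. I strengthen the hypothesis in the usual way: for any substitution instance $S\G$ of the context (with $S$ acting on type and row variables but never on parameters), if $\cjp{\Theta}{\Psi}{S\G}{c}{\cC}$, then inference on $\G$ succeeds with a substitution $S_0$ and type $\cC'$, and there is $R$ with $S = R \circ S_0$ on $\G$ and $\cC \sqsubseteq R\cC'$.

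Every case other than \textsc{ParHandle} is handled exactly as in the proof behind \Cref{thm:ground-complete}. This works because parameters act as rigid constants in unification ($\doteq$), so they behave like extra nullary base types. The arguments of~\cite{damas_principal_1982, pretnar2014inferring} never inspect base types, so they go through unchanged. The one thing to check is the \textsc{Gen}/\textsc{Do} generalization step: parameters are never generalized, which matches the declarative \textsc{Gen} rule, since there $\vec\alpha$ ranges only over $\Theta$.

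The remaining case is $\with \sigma, v \handle (c : \cC_a)$. Its declarative derivation gives three premises: $v : \thandler{\cC_a[\sigma]}{\cD}$; $c : \cC_a$ in the context $\Psi, \dom(\sigma)$; and well-formedness of $\cod(\sigma)$. I take the annotation $\cC_a$ to be what pins down the body's type. First I apply the IH to $c$, obtaining an inferred $\cC_c$ together with $R_c$ such that $\cC_a \sqsubseteq R_c\cC_c$. I then unify $\cC_c$ with the annotation $\cC_a$. This succeeds because $\cC_a$ is an instance, and afterwards every occurrence of a $\psi \in \dom(\sigma)$ in the body type is fixed syntactically. Next I apply the IH to $v$, obtaining an inferred handler type $\thandler{\cC_h}{\cD_h}$ and an $R_v$ witnessing $\thandler{\cC_a[\sigma]}{\cD} \sqsubseteq R_v(\thandler{\cC_h}{\cD_h})$. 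The algorithm then requires $\cC_a[\sigma] \doteq \cC_h$. A solution exists, namely the composite of $R_v$ and the declarative types, and unification is principal (first-order plus row unification with parameters treated as constants). Hence the most general unifier $U$ exists and factors every solution, and in particular the declarative one. Composing $U$ with the earlier substitutions yields the required $R$ for $\cD$.

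The main obstacle is the no-escape check $\dom(\sigma) \cap \fpv(\cD, \G) = \emptyset$. I must show that the most general unifier never introduces a $\psi \in \dom(\sigma)$ into the context or into $\cD$ when the declarative derivation did not. My approach is to argue that a parameter can only reach a type through a unification in which it already occurs on one side. Since the parameter occurs in the principal solution, and the declarative solution is an instance of the principal one, the parameter would also occur in the corresponding declarative types. Those are well-formed in $\Theta;\Psi$, and this gives a contradiction. I would make this precise as a lemma: $\fpv(R\,U\,X) \supseteq \fpv(U\,X)$ for substitutions that do not bind parameters. The annotation is exactly what rules out the problematic situation, in which an unannotated body leaves a type variable that the principal unifier resolves to an abstract $\psi$ and that then leaks. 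With the annotation, the substitution $\sigma$ is applied to a fully determined body type, so no such choice can arise.
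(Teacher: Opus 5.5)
Your proposal is correct and takes the same route as the paper. Both argue by structural induction, inherit every case other than \textsc{ParHandle} from the Hindley--Milner argument because parameters are rigid and never generalized, and close the \textsc{ParHandle} case by principality of the unification $\cC[\sigma] \doteq \cC'$ against the inferred handler input. Your explicit discharge of the no-escape condition $\dom(\sigma) \cap \fpv(\cD, \G) = \emptyset$ fills in a step the paper's proof leaves implicit: instance substitutions cannot erase parameter constants, so a $\psi$ in the inferred context or output would reappear in the well-formed declarative ones. Your closing remark misplaces the annotation's role, though: the annotation is what lets $\psi$ into inferred types at all, and the $\fpv$ argument alone is what rules out escape.
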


\begin{proof}
  By structural induction on the typing derivation.
  All cases apart from \textsc{ParHandle} follow the Hindley--Milner argument for algebraic effects~\cite{pretnar2014inferring, damas_principal_1982}.
  Since parameters are rigid and never generalized, unification and generalization arguments are unaffected.
  In particular, gen-admissibility~\cite{damas_principal_1982} still holds, so \textsc{Do} is unchanged.

  The only interesting case is \textsc{ParHandle}: $\with \sigma, v \handle (c : \cC)$.
  By the IH, the algorithm infers a handler type $\thandler{\cC'}{\cD'}$ such that $\thandler{\cC[\sigma]}{\cD} \sqsubseteq \thandler{\cC'}{\cD'}$.
  The body is then checked against the annotation $\cC$ with $\dom(\sigma)$ rigid in $\Psi$,
  and the algorithm unifies $\cC[\sigma] \doteq \cC'$.
  Since unification returns the most general solution, the final inferred output type $\cD''$ satisfies $\cD \sqsubseteq \cD''$.
\end{proof}

The annotation in \Cref{thm:param-complete} uses the full type $\cC$ of the body.
We strengthen this to a \emph{relevant} annotation consisting only of the operations that mention the introduced parameters.
This result relies on operation arities being monomorphic (\Cref{sec:calculus}).
Additional annotations would be required for universally quantified arities, and we leave the interplay of polymorphic and parameter annotations to future work.

\begin{definition} \label{def:relevant}
  Given a parameter substitution $\sigma$, the \emph{$\sigma$-relevant operations} of $\Sig$ are
  \[
    \rel_\sigma(\Sig)
    \;=\;
    \bigl\{\,\op : A_\op \to B_\op \in \Sig \mid
      \dom(\sigma) \cap (\fpv(A_\op) \cup \fpv(B_\op)) \neq \emptyset
    \,\bigr\}.
  \]
  A \emph{relevantly annotated term} is one in which each handle expression carries an annotation listing only the $\sigma$-relevant operations,
  $\with \sigma, v \handle (c : \rel_\sigma(\Sig))$. The return type and all other operations are omitted.
  Note that this partial annotation is possible via effect declarations (\Cref{sec:surface}).
\end{definition}

\begin{theorem}[Relevant Completeness] \label{thm:relevant-complete}
  Let $c$ be a relevantly annotated term.
  If $\cjp{\Theta}{\Psi}{\G}{c}{\cC}$, then $\wj{\Theta}{\Psi}{\G}{c}{\cC'}$ with $\cC \sqsubseteq \cC'$.
\end{theorem}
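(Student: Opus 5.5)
The proof reduces relevant completeness to parametrized completeness (\Cref{thm:param-complete}). It is a structural induction on the typing derivation, exactly as in that lemma. All cases other than \textsc{ParHandle} are unchanged from the Hindley--Milner argument, and \textsc{Do} uses gen-admissibility since parameters are never generalized. So the whole content lies in the \textsc{ParHandle} case $\with \sigma, v \handle (c : \rel_\sigma(\Sig))$. There the derivation gives $\thandler{\cC[\sigma]}{\cD}$ for $v$ and $\cjp{\Theta}{\Psi,\dom(\sigma)}{\G}{c}{\comp{A}{\Sig}}$ for the body. The goal is to show that running the algorithm on the body with only the relevant operations declared yields an inferred type $\cC_c$ whose unification $\cC_c[\sigma] \doteq \cC_h$ succeeds with a most general solution that still admits $\cD$ as an instance.

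First, I would apply the outer IH to $v$, giving $\thandler{\cC[\sigma]}{\cD} \sqsubseteq \cC_h \Rightarrow \cD'$. For the body, I would apply the IH to $c$ with the declarations from the annotation in scope. This gives $\cC_c$ with $\comp{A}{\Sig} \sqsubseteq \cC_c$ under some instantiation $\theta$. The key observation concerns what happens in the inferred row: the declared relevant operations carry their arities verbatim, including the rigid parameters in $\dom(\sigma)$. Every other operation appears with arity built from unification variables, possibly open-ended through a row variable. Next, I would argue that $\cC_c[\sigma]$ and $\cC_h$ have a common instance, namely $\comp{A}{\Sig}[\sigma]$ (equal to $\cC[\sigma]$), by applying $\theta$ together with the handler's instantiation. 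The point to check is that $\sigma$ commutes with $\theta$ on non-relevant positions. This holds because by definition of $\rel_\sigma$ the non-relevant arities in $\Sig$ contain no $\psi \in \dom(\sigma)$, so $\theta$ maps their variables to $\psi$-free types and $\sigma$ acts as the identity there. Similarly, the return type $A$ may mention $\psi$ only if it leaks through a relevant operation or through $\Gamma$. Since parameters enter the body only through declarations and are rigid, any $\psi$ in $A$ is forced by inference itself rather than guessed.

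By principality of first-order and row unification with rigid constants, $\cC_c[\sigma] \doteq \cC_h$ then succeeds with a most general unifier $\mu$ factoring the common instance. Hence $\cD \sqsubseteq \mu(\cD')$. The no-escape check also succeeds: $\dom(\sigma) \cap \fpv(\cD,\G) = \emptyset$ holds declaratively, and $\mu$ cannot introduce a parameter into $\cD'$ that is absent from some instance of it, since parameters are constants and instances only replace variables.

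The main obstacle is this no-escape/commutation step. I have to rule out that an unannotated, non-relevant position gets unified with $\psi$ in the body while the handler side expects the concrete $\cod(\sigma)$ type, or conversely. What I would try first is an auxiliary lemma. It states that in the body's inferred type, every occurrence of a parameter in $\dom(\sigma)$ is entailed by the declared relevant arities, so that any most general solution places $\psi$ exactly where every declarative derivation must. I would prove it by induction alongside completeness, using monomorphic arities (no quantified arity can hide $\psi$) and the rigidity of parameters in $\doteq$. The residual ambiguity is whether a type variable should become $\psi$ or $\sigma(\psi)$. That ambiguity is resolved to the $\psi$-free choice, which is the more general one after applying $\sigma$.
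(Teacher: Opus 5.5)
Your proposal is correct and follows the paper's proof. Both extend the induction of the Parametrized Completeness lemma, split the \textsc{ParHandle} case into $\sigma$-relevant operations (fixed by the annotation) and $\sigma$-free ones (on whose arities $\sigma$ is the identity), and argue that $\psi$ reaches the return type only through declared types, with your explicit unifier and no-escape argument filling in detail the paper leaves implicit; one caution is that your claim that every $\psi$ in $A$ is forced by inference is too strong, since an unconstrained position (e.g.\ the result of a diverging function) may be typed $\psi$ declaratively while inference leaves a variable, but taking $\sigma \circ \theta$ together with the handler's instantiation as the common instance, as your last sentence suggests, handles this uniformly.
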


\begin{proof}
  We extend the proof of \Cref{thm:param-complete},
  strengthening the \textsc{ParHandle} case to show that from the relevant annotations, the algorithm infers a type at least as general as the body's declarative type.
  The handle introduces $\dom(\sigma)$ into $\Psi$ as rigid constants, and the operations split into two kinds:
  \begin{enumerate}
    \item \emph{$\sigma$-relevant.}
      These arities are supplied by the annotation, so they are typed exactly as in \Cref{thm:param-complete}.
    \item \emph{$\sigma$-free.}
      For any such $\op : \arity{A_\op}{B_\op}$, we have $\dom(\sigma) \cap (\fpv(A_\op) \cup \fpv(B_\op)) = \emptyset$,
      so $\sigma$ is the identity on the arity, and the op is inferred as in the unparametrized system with no annotation.
  \end{enumerate}
  \emph{Return type.}
  A value or type mentioning $\psi\in\dom(\sigma)$ can enter the body only through a $\sigma$-relevant operation.
  Hence, once the $\sigma$-relevant operations are annotated, every $\psi$ is fixed wherever it flows.
  The return type of the body is then inferred with the parameters acting as fixed base types.
\end{proof}

Thus, the annotation burden is minimal.
The local-state example of \Cref{sec:local-state} is annotated only with
the three $\sigma$-relevant arities \lstinline|!get : $loc ~> Int|, \lstinline|!set : $loc * Int ~> Unit|, and \lstinline|!ref : Int ~> $loc|,
which is enough to infer the type of the whole program.
Moreover, the relevant annotation is a term-independent sufficient condition.
Operations in $\rel_\sigma(\Sig)$ may be omitted from annotations when $\psi$ propagates to their arity via term-level flow.
For example, annotating only \lstinline|!set| in
\begin{lstlisting}
  effect !set : $loc * Int ~> Unit.
  do x <- !ref 3 in !set (x, 4)
\end{lstlisting}
permits the algorithm to infer the arity of \lstinline|!ref| is \lstinline|Int ~> $loc|.
For this reason, \cambria{} enforces no constraints on which operations need to be annotated,
letting the programmer manage the interface with individual effect declarations.

\section{Related Work}\label{sec:related}

\paragraph{Algebraic effects and handlers.}
Algebraic effects were introduced by Plotkin and Power~\cite{plotkin_algebraic_2003} and equipped with handlers by Plotkin and Pretnar~\cite{plotkin_handling_2009}.
They are now incorporated in numerous languages, such as Eff~\cite{bauer_programming_2015}, Koka~\cite{leijen_koka_2017},
Frank~\cite{lindley_do_2017}, Links~\cite{hillerstrom2016liberating}, and OCaml~5~\cite{sivaramakrishnan_retrofitting_2021}.
Handlers in OCaml~5 are widely deployed, but the language does not track effect use in its type system; the focus is rather on handlers as a control mechanism.
\cambria{} is closest in spirit to the fine-grained call-by-value calculus of Kammar et al.~\cite{kammar_no_2017},
whose type-safety results we extend to the parametrized setting.

\paragraph{Parametrized algebraic theories.}
\cambria{} is inspired by \emph{parametrized algebraic theories}~\cite{Staton13, Staton13PL} (PATs),
where operations may bind and use abstract parameters, giving an algebraic account of effects that allocate resources.
PATs have been used to axiomatize a range of effects, including local state~\cite{Staton13}, code jumps~\cite{staton2014jumps},
the Beta-Bernoulli process~\cite{staton2018beta}, quantum measurement~\cite{staton-quantum}, and scoped effects~\cite{matache-scoped}.
\cambria{} complements this line of work by realizing it in the algebraic effects and handlers framework, with parametricity enforcing the abstraction layer.
Concretely, an operation returning a parameter, such as \lstinline|!fresh : Unit ~> $name|,
is the \emph{generic effect}~\cite{plotkin_algebraic_2003} of a PAT operation that binds that parameter in its continuation.
Another recent realization, Paella~\cite{sigal2024paella}, is an Idris~2 library that models dynamically allocated resources through a Kripke possible-world semantics,
where types are indexed by the resources currently allocated.
\cambria{} addresses the same setting as a stand-alone calculus where parameters are abstract types erased at runtime,
and it establishes parametricity rather than providing a semantic framework.

\paragraph{Effect instances and named handlers.}
The \emph{problem of instances}, where effects dynamically allocate resources, has motivated a substantial line of work.
Eff originally had first-class effect instances~\cite{bauer_programming_2015}, but removed them because the dynamic
correspondence between operations and handlers complicated reasoning.
Subsequent calculi recovered instances by a range of methods,
including dynamically allocated labels~\cite{biernacki2020binders, devilhena2023labels} and capability passing~\cite{brachthauser2020capabilities}.
As a practical language, Koka~\cite{leijen_koka_2014} uses \emph{named handlers}~\cite{xie2022firstclass, xie2020evidently},
which expose each instance as a first-class name passed as evidence at runtime.
This can model dynamically allocated heaps with first-class references, and rank-2 polymorphism scopes the instances to prevent leakage.
In each of the approaches above, the instance is a concrete object in the program.
\cambria{} takes a different approach, separating dispatch from instances.
A handler interprets all instances of an effect, and the resource is exposed only as an abstract \emph{parameter} that the handler instantiates.
Thus, \cambria{} guarantees \emph{representation independence} over the user-defined resource.
Moreover, there is no type-directed reduction, and since a parameter is a genuine type rather than a witness bound to an instance,
a single handler can coordinate distinct instances, as in the dynamic threads (\Cref{sec:concurrency}) and jump~\cite{staton2014jumps} effects.

\paragraph{Parametricity and logical relations.}
Parametricity~\cite{reynolds1983types} is the source of \cambria{}'s abstraction guarantees,
stemming from reading parameters as existential types~\cite{mitchell1988existential}
and data abstraction as representation independence~\cite{mitchell1986representation}.
We establish this through a step-indexed logical relation~\cite{appel_indexed_2001},
which was extended to recursive and quantified types by Ahmed~\cite{ahmed_semantics_2006}, and which we adapt to handlers.
This has been done biorthogonally~\cite{biernacki2018handle} by coupling the computation relation to the evaluation contexts,
and abstraction over effect signatures has separately been obtained through a type discipline~\cite{biernacki2019abstracting}.
We move the abstraction to resource identifiers and take the direct approach, where each type contributes an independent clause and new features are local.

\paragraph{Reasoning about dynamic allocation.}
Relational models with dynamic allocation often require Kripke logical relations indexed by possible worlds,
such as in generative, state-dependent ADTs~\cite{ahmed2009state}, or fresh-name generation in the $\nu$-calculus~\cite{zhang2003logical}.
\cambria{} has similar guarantees for local state and fresh-name effects (\Cref{sec:local-state}) \emph{without} appealing to possible worlds.
The relational interpretation of the parameter type plays the role that worlds play for a primitive full mutable heap.
Complementary to our type-based guarantees are program logics for handlers, such as the Hazel separation logic~\cite{devilhena2021separation}
and its relational counterpart Blaze~\cite{devilhena2026relational},
which, on a per-program basis, verify properties of handler programs over real mutable state and, for Blaze, concurrency.

\paragraph{Concurrent effects.}
Kammar et al.~\cite{kammar2025equational} develop an equational theory for dynamic threads as an algebraic effect,
using concrete thread IDs without type abstraction.
\cambria{}'s parameter variables make thread IDs abstract, giving parametricity guarantees and idiomatic handlers (\Cref{sec:concurrency}) for concurrent programs.

\section{Conclusion}\label{sec:conclusion}

We have presented \cambria{}, a calculus that extends algebraic effects and handlers with parameters to provide type abstraction for resource-allocating effects.
Parameter variables are first-class in the type system but play no role at runtime.
A step-indexed logical relation establishes parametricity, where client code cannot inspect the concrete representation of the parameter outside of the effect interfaces.
The decomposition of parametrized handlers into ordinary handlers and existential types
means that the surface language stays ergonomic while the metatheory reduces to standard, modular pieces.

We have demonstrated \cambria{}'s expressiveness with examples including local state, P\'olya's urn, and concurrent thread management.
The calculus is type safe and backwards compatible,
and type inference stays practical, as demonstrated in the implementation, with only effect arities mentioning parameters needing annotations.
These results are backed by a mechanization in the Rocq prover.
In the dynamic threads example, the thread IDs go beyond standard instances by coordinating concurrent processes.
Yet they remain abstract, providing a representation-independence guarantee.

\paragraph{Future work.}
Natural extensions include substructural parameters to incorporate quantum effects~\cite{staton-quantum} and scoped effects~\cite{matache-scoped}.
Such parameters could also address the reachability limitation of \Cref{sec:equations},
alongside built-in references and region-based memory management~\cite{tofte1997region,lorenzen2024oxidizing}.
A probabilistic semantics would also allow internalization of de Finetti-style representation equivalences like the one between the two P\'olya urn handlers (\Cref{sec:polya}).
We are also interested in whether parameter abstraction offers an alternative to tunnelling~\cite{zhang2019tunneling},
preventing effect-polymorphic code from accidentally handling effects it is unaware of.
Other directions include refining the type system to be bidirectional~\cite{dunfield2021bidirectional, lindley_do_2017},
and a denotational semantics to establish the categorical structure of parametrized handlers.

\bibliographystyle{ACM-Reference-Format}
\bibliography{references}

\end{document}